	\renewcommand{\maketitle} 
	{ \begingroup \vskip 10pt \begin{center} \large {\bf \@title}
		\vskip 10pt \large \@author \hskip 20pt \@date \end{center}
	  \vskip 10pt \endgroup \setcounter{footnote}{0} }
	\newcommand{\be }{\begin{equation}}
	\newcommand{\ee}{\end{equation}}
	 \newcommand{\vpa}{\vec{\partial}}
	 \newcommand{\gld}[1]{\widehat{\mathcal{L}}_{#1}}
	 \newcommand{\lb}{\left}
	 \newcommand{\rb}{\right}
	 \newcommand{\bigO}{\mathcal{O}}
	\newcommand{\SUL}[2]{\overset{#2}{\underset{#1}{\sum}}}
	\let\baraccent=\= 
	\renewcommand{\=}[1]{\stackrel{#1}{=}} 
	\newtheorem{thm}{Theorem}[section]
	\newtheorem*{conj}{HZ conjecture}
	\theoremstyle{definition}
	\theoremstyle{remark}
	\numberwithin{equation}{section} 
	\numberwithin{figure}{section} 
	\numberwithin{table}{section} 
	\title{\boldmath  A note on large gauge transformations in double field theory}
	\author{Usman Naseer}
	\affiliation{Center for Theoretical Physics,\\
	Massachusetts Institute of Technology,\\
	Cambridge, MA 02139, USA.
	}
	\emailAdd{unaseer@mit.edu}
	\abstract{
	We give a detailed proof of the conjecture by Hohm and Zwiebach in double field theory. This result implies that their proposal for large gauge transformations in terms of the Jacobian matrix for coordinate transformations is, as required, equivalent to the standard exponential map associated with the generalized Lie derivative along a suitable parameter.
	}
\begin{document}
	\rightline{}
	\rightline{\tt }
	\rightline{\tt  MIT-CTP/4665}
    \rightline{April 2015}
	\onehalfspacing
	
	\maketitle
	
	\section{Introduction}
	Double field theory was developed to make manifest the $O(D,D)$ T-duality  symmetry in the low energy effective field theory limit of string theory. In addition to the usual spacetime coordinates, `winding' coordinates are introduced. The metric and the Kalb-Ramond two-form are combined into a `generalized metric'. This generalized metric transforms linearly under global $O(D,D)$ transformations. Gauge transformations of the fields can also be written in an $O(D,D)$ covariant form and they can be interpreted as the `generalized coordinate transformations' in the doubled spacetime as discussed in \cite{LargeGT}. The action of double field theory, written in terms of the generalized metric, is then manifestly invariant under these transformations. Double field theory is a restricted theory. The so-called strong constraint restricts the theory to live on a $D$-dimensional subspace of the doubled spacetime. Different solutions of the strong constraint are then related by T-duality. Double field theory was developed in \cite{Siegel,Siegel2,hullzw1,BGIaction,GaugeAlgebra,GenMetricForm} and earlier ideas can be found in \cite{DSform,DSCST,DRinST,DRot}. Further developments of double field theory are discussed in \cite{FlikeGofDFT,KwakEOM,TvsG,Lnotes,HString,UniII,DFTII,MassiveII,SUSYDFT,RTensor,GGandM,DGP2,BMP,BGGP,GSTIIA,RW,DGAPP,StringyDG,FermionDFT,SDFT,RRCoh,Tfolds,Connecting,sigma,MtoD,DoubleBranes,EffAc,N4gauged,Gauged,SUGRAasGG,GGMth,GofDFT}. For recent reviews on this subject see \cite{Rev2} and \cite{Rev1}.
	
	\par The important issue of large gauge transformations in double field theory was discussed in \cite{LargeGT} where a formula for finite gauge transformations of fields has been proposed. These transformations are induced by generalized coordinate transformations. Under the transformation from cordinates $X^M$ to coordinates $X'^M$ fields are claimed to transform by the matrix $\mathcal{F}$ given by
	\begin{eqnarray}
		\mathcal{F}_M^{\;\;N}=\frac{1}{2}\left(\frac{\partial X^P }{\partial X^{\prime M}}\frac{\partial X_{P}^{\prime}}{\partial X_N}+\frac{\partial X_M^{\prime} }{\partial X_P}\frac{\partial X^N}{\partial X^{\prime P}}\right).\label{defF}
	\end{eqnarray}
	A generalized vector $V_M$ transforms as
	\be 
	V^{\prime}_{M}(X^{\prime})=\mathcal{F}_{M}^{\;\;N}V_{N}(X), \label{VecRule}
	\ee
	or, in index-free notation,
	\be 
	V^{\prime}(X^{\prime})=\mathcal{F}V(X),\label{compII}
	\ee
	and fields with more than one $O(D,D)$ index transform tensorially where each index is transformed by $\mathcal{F}_{M}^{\;\;N}$. If the finite transformation of the coordinates is generated by a parameter $\xi^M(X)$, i.e.,
	\begin{eqnarray}
		X^{\prime ^M}=e^{-\xi^P\partial_P}X^M, \label{Xprime} 
	\end{eqnarray}
	then $\mathcal{F}$ can be expressed in terms of this new parameter, i.e. $\mathcal{F}=\mathcal{F}(\xi)$. 
	\par By considering an infinitesimal coordinate transformation $X^{\prime ^M}=X^M-\xi^M$ \cite{LargeGT}, the transformation rule (\ref{VecRule}) leads to the infinitesimal transformation of a generalized vector, which is given by the generalized Lie derivative $\gld{\xi}$ as follows:
	\begin{eqnarray}
		V^{\prime}_M(X)&=&V_M(X)+\xi^K\partial_KV_M(X)+\lb(\partial_M\xi^K-\partial^K\xi_M\rb)V_K(X)
		=V_M(X)+\lb(\hat{\mathcal{L}}_{\xi}V(X)\rb)_M.\ \ \ 
	\end{eqnarray}Generalized Lie derivatives define a Lie algebra \cite{GenMetricForm} but the closure of the Lie algebra requires the strong constraint
	\be \partial^P\partial_P\lb(\cdots\rb)=0,\label{strongconst}\ee
	where `$\cdots$' indicate arbitrary fields, parameters or their products (so that $\partial^P\partial_P\ A =0\ $ and $\partial^PA\ \partial_P\ B=0\ $ for any fields or parameters $A$ and $B$). We can also realize large gauge transformations by exponentiating the generalized Lie derivative.  Then, as discussed in \cite{LargeGT}, these transformations will form a group. For a vector, they are given by
	\begin{eqnarray}
		V'(X')&=& e^{-\xi^P\partial_P}\ e^{\hat{\mathcal{L}}_{\xi}}\ V(X)\ \equiv\ \mathcal{G}\lb(\xi\rb)\  V(X),\label{compI}
	\end{eqnarray}
	where we make a useful definition
	\begin{eqnarray}
		\mathcal{G}\lb(\xi\rb)\equiv\ e^{-\xi^P\partial_P}\ e^{\hat{\mathcal{L}}_{\xi}}.
		\label{defG}
	\end{eqnarray}
	From (\ref{compI}) and (\ref{compII}) we would expect that 
	\be  
	\mathcal{F}(\xi)\ \stackrel{?}{=}\ \mathcal{G}(\xi)\label{conj1},
	\ee but this is not true and this fact does not lead to any inconsistency. The important question is whether there exists a generalized coordinate transformation for which the corresponding field transformation given by (\ref{defF}) equals $\mathcal{G}\lb(\xi\rb)$. This is equivalent to asking whether it is possible to find a parameter $\xi'(\xi)$ such that $\ \mathcal{F}\lb(\xi\rb)=G\lb(\xi'\lb(\xi\rb)\rb)$. This is precisely the question we address in this paper and the answer is positive. Consistency requirements lead to the condition that $\xi'$ and $\xi$ can only differ by a quasi-trivial parameter 
	\be 
	\xi'^P(\xi)=\xi^P+\sum_{i}\rho_{i}(\xi)\partial^P\eta_{i}(\xi),
	\ee
	where we have defined a quasi-trivial parameter to have the form $\sum_{i}\rho_{i}(\xi)\partial^P\eta_{i}(\xi)$, where the free index is carried by a derivative. We will also refer to the parameters of the form $\partial^M f$ as `trivial' parameters. Note that every trivial parameter is also quasi-trivial. It is also easy to see, using the strong constraint, that $\gld{\xi}$ is zero for a trivial parameter and it is not zero in general for a quasi-trivial parameter.
	
	\par These insights were used in \cite{LargeGT} to determine such $\xi'(\xi)$ to cubic order in $\xi$. We also note a small difference in notation and approach as compared to \cite{LargeGT} where parameters $\xi$ and $\Theta\lb(\xi\rb)$ are related by $\ \mathcal{G}\lb(\xi\rb)=\mathcal{F}\lb(\Theta\lb(\xi\rb)\rb)\ $. Since $\Theta\lb(\xi\rb)$ is just a sum of $\xi$ and higher order terms in $\xi$, this relation can be inverted to obtain $\xi\lb(\Theta\rb)$. With this in mind, the conjecture by Hohm and Zwiebach (HZ conjecture) can be stated as
	\begin{conj}For every parameter $\xi\lb(X\rb)$ there exists a parameter $\xi^{\prime}(\xi)$  such that
		$\ \ 
		\xi^{\prime M}\lb(\xi\rb)=\xi^M+\delta^M(\xi)\ 
		$
		where  $\delta^M(\xi)$  is quasi-trivial and
		\be  \mathcal{G}\lb(\xi^{\prime}\lb(\xi\rb)\rb)=\mathcal{F}(\xi).\label{claim}\ee
			
	\end{conj}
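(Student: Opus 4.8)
The plan is to promote $\xi$ to a one-parameter family, reduce \eqref{claim} to a first-order linear ODE in the parameter, and match generators. Write $X'_s := e^{-s\xi^P\partial_P}X$, so that $X'_1 = X'$ and $s\mapsto X'_s$ is the flow of $-\xi^P\partial_P$, and set $\mathcal F_s := \mathcal F(s\xi)$, $\mathcal G_s := \mathcal G(s\xi)$. The starting point is that $\mathcal G_s$ obeys a remarkably clean equation: writing $\gld{\xi} = \xi^P\partial_P + \omega$ with $\omega_M{}^{\,N} := \partial_M\xi^N - \partial^N\xi_M$ (antisymmetric when both indices are lowered, hence valued in $\mathfrak{so}(D,D)$ acting on the $O(D,D)$ index), and using $e^{-s\xi^P\partial_P}\,\omega(X)\,e^{s\xi^P\partial_P}=\omega(X'_s)$ for the adjoint action of the translation operator on this matrix, one differentiates \eqref{defG} to obtain, with no use of the strong constraint,
\begin{equation}
\partial_s\,\mathcal G_s \;=\; \omega(X'_s)\,\mathcal G_s ,\qquad \mathcal G_0 = \mathbb{1}. \label{Gode}
\end{equation}
More generally, for any family $\zeta_s$ with $\zeta_0 = 0$ whose deviation $\delta_s := \zeta_s - s\xi$ is quasi-trivial, the strong constraint \eqref{strongconst} gives $\delta_s^{\,P}\partial_P(\cdots)=0$, so $e^{-\zeta_s^P\partial_P}=e^{-s\xi^P\partial_P}$ on admissible configurations and $\gld{\zeta_s}=s\,\gld{\xi}+\gld{\delta_s}$; differentiating $\mathcal G(\zeta_s)=e^{-s\xi^P\partial_P}e^{\,s\gld{\xi}+\gld{\delta_s}}$ with the Baker--Campbell--Hausdorff derivative formula, and using the closure of the algebra of generalized Lie derivatives (so that every commutator produced is again a generalized Lie derivative), yields
\begin{equation}
\partial_s\,\mathcal G(\zeta_s) \;=\; \big[\,\omega(X'_s) + \mathcal B_s\,\big]\,\mathcal G(\zeta_s),\qquad \mathcal G(\zeta_0)=\mathbb{1}, \label{Gzode}
\end{equation}
where $\mathcal B_s$ is assembled from the generalized Lie derivative of $\dot\delta_s$ together with nested brackets of $\xi$, $\dot\delta_s$ and $\delta_s$, and vanishes when $\delta_s\equiv 0$.

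Next I would differentiate $\mathcal F_s$ directly from \eqref{defF}. Since $\partial_s X'_s=-\xi(X'_s)$, the Jacobian $J_s:=\partial X'_s/\partial X$ satisfies $\partial_s J_s=-\big(\partial\xi/\partial X'\big)\big|_{X'_s}\,J_s$; substituting this into \eqref{defF} (and using $\mathcal F_s\in O(D,D)$, so that $\mathcal F_s^{-1}\partial_s\mathcal F_s\in\mathfrak{so}(D,D)$) one obtains
\begin{equation}
\partial_s\,\mathcal F_s \;=\; \big[\,\omega(X'_s) + \mathcal C_s\,\big]\,\mathcal F_s ,\qquad \mathcal F_0 = \mathbb{1}, \label{Fode}
\end{equation}
whose leading piece is again $\omega(X'_s)$ — reflecting that \eqref{defF} and \eqref{defG} agree to first order, cf.\ the infinitesimal transformation rule — and whose remainder $\mathcal C_s$ is built from the finite Jacobian $J_s$ and second derivatives of $\xi$. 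The crucial claim is that, once the strong constraint is imposed, $\mathcal C_s$ has exactly the structure of the generalized Lie derivative of a quasi-trivial parameter: $\mathcal C_s=\gld{\chi_s}$ for some quasi-trivial $\chi_s=\chi_s(\xi)$ (recall that the generalized Lie derivative of a quasi-trivial parameter contributes only through its matrix part, its transport part vanishing by \eqref{strongconst}). One should not expect $\mathcal C_s$ to vanish, since \eqref{defF} and \eqref{defG} genuinely differ.

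Granting this claim, the Conjecture follows. As \eqref{Gzode} and \eqref{Fode} are linear with the same initial value, uniqueness reduces everything to solving $\mathcal B_s=\mathcal C_s$ for a quasi-trivial $\delta_s$ with $\delta_0=0$. The C-bracket of $\xi$ with a quasi-trivial parameter is again quasi-trivial (a lemma provable by direct computation with \eqref{strongconst}, the potentially obstructing terms cancelling between the two halves of the bracket), so $\mathcal B_s=\gld{\dot\delta_s}+\gld{\mu_s}$ for a quasi-trivial $\mu_s=\mu_s(\xi,\delta_s)$, and $\mathcal B_s=\mathcal C_s$ becomes $\gld{\dot\delta_s}=\gld{\chi_s-\mu_s}$. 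Since the generalized Lie derivative annihilates exactly the trivial parameters, this fixes $\dot\delta_s$ up to a trivial parameter — the residual freedom in $\xi'$ — and in particular $\dot\delta_s$ is quasi-trivial. One then solves the resulting ODE for $\delta_s$ order by order in the number of derivatives acting on $\xi$ (a triangular recursion terminating at each order), with $\delta_0=0$; since sums and integrals of quasi-trivial parameters are quasi-trivial, $\delta_s=\int_0^s\dot\delta_u\,du$ is quasi-trivial for all $s$. Then $\zeta_s:=s\xi+\delta_s$ gives $\mathcal G(\zeta_s)=\mathcal F_s$, and $\xi'(\xi):=\zeta_1=\xi+\delta_1$ satisfies \eqref{claim} with $\delta^M(\xi)=\delta_1^M$ quasi-trivial.

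The hardest step is the crucial claim of the third paragraph: following $\mathcal C_s$ — which a priori involves the full finite Jacobian of the flow, hence arbitrarily high powers of $\xi$ — and showing that, modulo the strong constraint, it collapses precisely onto the image of the quasi-trivial parameters under the generalized Lie derivative, with nothing non-quasi-trivial surviving. This is the all-orders counterpart of the cubic-order analysis of \cite{LargeGT}, and it is where the specific $O(D,D)$-valued combination \eqref{defF} of the Jacobian and its inverse-transpose, and the non-closure terms of the generalized Lie derivative, must conspire. A subsidiary, purely technical point is to justify the manipulations with $e^{-s\xi^P\partial_P}$, $e^{s\gld{\xi}}$ and the Baker--Campbell--Hausdorff derivative formula — formal power series in $\partial_P$ — which one handles at the usual level of rigor, e.g.\ on analytic fields or order by order in derivative count, so that only finitely many terms contribute at each order.
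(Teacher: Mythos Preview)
Your strategy---introduce a one-parameter family, derive first-order ODEs for $\mathcal{F}$ and for $\mathcal{G}$ along the family, and match the inhomogeneous pieces---is exactly the paper's. The paper writes the ODEs in right-action form (e.g.\ $\tfrac{d}{d\lambda}\mathcal{F}=\mathcal{F}(-\overleftarrow{\xi}+a-a^t)+\tfrac{1}{2}B(Aa^t-aA^t)$) rather than your left-multiplication by $\omega(X'_s)$, but this is cosmetic.

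Two substantive points of comparison. First, instead of differentiating $\mathcal{G}(\zeta_s)$ directly via the BCH derivative formula, the paper first proves a decomposition lemma: $\xi'=\xi+\delta$ with $\delta$ quasi-trivial if and only if $\mathcal{G}(\xi')=\mathcal{G}(\xi)\,\mathcal{G}(\chi)$ with $\chi$ quasi-trivial, the two related by a BCH series in C-brackets. Since $\mathcal{G}(\chi)=\mathbf{1}+\Delta-\Delta^t$ with $\Delta_M{}^N=\partial_M\chi^N$ nilpotent, differentiating the product is elementary. This replaces your abstractly defined $\mathcal{B}_s$ by an explicit expression in $\Delta$, and the resulting recursion $(n{+}1)(\Delta_{n+1}-\Delta_{n+1}^t)+(\xi+a)(\Delta_n-\Delta_n^t)+(\Delta_n-\Delta_n^t)a^t=\tfrac{1}{2}(A_na^t-aA_n^t)$ is clean enough to be solved in closed form (the coefficients are binomial), not merely order by order.

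Second---and this is where your outline has a genuine gap---you correctly flag your ``crucial claim'' about $\mathcal{C}_s$ as the hardest step, but you do not prove it, and nothing in your outline rules out a non-quasi-trivial remainder. The paper settles this by direct computation: from $B(\lambda)=(\mathbf{1}\cdot e^{\lambda(-\overleftarrow{\xi}+a)})$ and a handful of strong-constraint identities ($a^tB=a^t$, $(B^t)^{-1}a=a$, $(B^t)^{-1}B=B+(B^t)^{-1}-\mathbf{1}$) the inhomogeneous term collapses to $\tfrac{1}{2}(Aa^t-aA^t)$. This matrix has both free indices carried by partial derivatives and is manifestly antisymmetric, hence is exactly of the form $b-b^t$ with $b_M{}^N=\partial_M\chi^N$ for a quasi-trivial $\chi$---no all-orders conspiracy is required, the structure is visible once the computation is done. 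Filling in this computation (your equation \eqref{Fode} made explicit) is what turns your sketch into a proof.
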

	The main aim of this paper is to prove this conjecture by giving a procedure which can be used to obtain $\xi'(\xi)$ to all orders in $\xi$. We will do this by finding a parameter $\xi'(\xi)$ such that  $\mathcal{G}(\xi^{\prime}(\lambda\xi))$ and $\mathcal{F}(\lambda\xi)$ satisfy the same first order differential equation in $\lambda$ and the same initial condition at $\lambda=0$ . By the uniqueness of the solution we can then deduce that \be  \mathcal{F}(\lambda\xi)=\mathcal{G}(\xi'\lb(\lambda\xi\rb)),\ee
	and by setting $\lambda=1,$ we get (\ref{claim}).
	 
	The HZ conjecture was examined explicitly in \cite{GlobalAspects}; see also \cite {Park} and \cite{Hull} for other relevant discussion. In \cite{GlobalAspects}, Berman, Perry and Cederwall considered this issue in the context of equivalence classes of transformations. These equivalence classes were defined modulo the so-called `non-translating' transformations, which are the transformations generated by the quasi-trivial parameters. These transformations differ from the identity by nilpotent matrices. Important progress was made by showing that $\mathcal{F}\lb(\xi\rb)$ and $\mathcal{G}\lb(\xi\rb)$ belong to the same equivalence class but the issue of the existence of $\xi'\lb(\xi\rb)$ such that (\ref{claim}) holds was not addressed completely. Our results in section (\ref{decG}) play a key role in extending the results of \cite{GlobalAspects} to prove HZ conjecture. Our construction here is more straightforward and one can in fact show, without finding the explicit form of $\xi'\lb(\xi\rb)$, that there exists a parameter $\xi'\lb(\xi\rb)$ such that (\ref{claim}) holds.
	
	\par This paper is organized as follows. In section \ref{rev}, we review the basics of gauge transformations in double field theory. We also review how the consistency requirements for the transformations of the scalar field constrain the form of $\xi'\lb(\xi\rb)$. In section \ref{decG}, we show that  $\mathcal{G}\lb(\xi'\lb(\xi\rb)\rb)$ can be written as the product $\mathcal{G}\lb(\xi\rb)\mathcal{G}\lb(\chi\lb(\xi\rb)\rb)$, where $\chi\lb(\xi\rb)$ is a quasi-trivial parameter. Further we show how $\chi\lb(\xi\rb)$ can be used to find $\xi'\lb(\xi\rb)$. Differential equations for $\mathcal{F}\lb(\lambda\xi\rb)$ and $\mathcal{G}\lb(\xi'\lb(\lambda\xi\rb)\rb)$ are derived in sections \ref{difF} and \ref{deqG} respectively. By comparing the two differential equations, we obtain an iterative relation (\ref{Iterative}). In section \ref{Method}, we give a procedure to use this relation  systematically to obtain $\xi'\lb(\xi\rb)$ to all orders in $\xi$, proving the HZ conjecture. We use our procedure to compute $\xi'\lb(\xi\rb)$ up to quintic order in section \ref{checks}. Up to quartic order, the results are found to be in agreement with \cite{LargeGT}. For the quintic order, the result is checked by explicit computation using Mathematica. In section \ref{Allord}, we give an explicit formula for $\chi\lb(\xi\rb)$ to all orders in $\xi$. We also discuss the connection between our results and those in \cite{GlobalAspects} in section \ref{Allord}. In section \ref{concl}, we discuss the implications of our results on the composition of coordinate transformations in double field theory. Finally, we comment on finite gauge transformations in exceptional field theory. 
	\section{Preliminaries}\label{prelem}
	\subsection{Review}\label{rev}
	In this section we review the basics of gauge transformations in DFT and introduce notation and conventions required to derive the relevant differential equations. Most of these relations were first derived in \cite{LargeGT} and we present them here without proof. 
	
	Consider a finite coordinate transformation which is generated by a parameter $\xi^M$, i.e.,
	\be \label{zetaDef}
	X'^M\ = \ X^M-\ \zeta^M \lb(\xi\rb) \ \equiv \ \lb(e^{-\xi^P\ \partial_P}\rb)\  X^{M},
	\ee
	or in a more transparent, index free notation,
	\be 
	X'\ =\ X-\ \zeta(\xi)= e^{-\xi^P\partial_P}\ X\ \ .\label{gctrans}
	\ee
	Using the definition of $\mathcal{F}$ as in (\ref{defF}), we can express it in terms of the generating parameter $\xi^M$ as follows.
	\begin{eqnarray}
		\mathcal{F}(\xi)&=&\frac{1}{2}(e^{-\xi}e^{\xi+a}e^{-\xi}e^{\xi-a^t}+e^{-\xi}e^{\xi-a^t}e^{-\xi}e^{\xi+a}).\label{Fexi}   
	\end{eqnarray}
	Here $a_M^{\;\;N}\ \equiv\ \partial_M\xi^N$ and $\xi$ appearing on the RHS is to be understood as the differential operator $\xi \cdot f \ \equiv\  \xi^P\partial_P\ f$. This expression for $\mathcal{F}$ may look like a differential operator but it is a matrix function as demonstrated in section 4.1 of \cite{LargeGT}.
	The proof essentially follows from the fact that $\  e^{-\xi}\ e^{\xi+a}\ $ is not a differential operator and it can be written in the following form, which makes its matrix nature manifest.
	\be  e^{-\xi}e^{\xi+a}=\lb(\mathbf{1}\cdot e^{-\overleftarrow{\xi}+a}\rb),\ee
	where we have introduced the notation
	\be \beta\overleftarrow{\xi}\Lambda\equiv \lb(\xi^P\partial_P\beta\rb) \Lambda.\ee
	Note that we can also write $\mathcal{G}(\xi)$ in a similar manner as
	\be 
	\mathcal{G}(\xi)= e^{-\xi}e^{\gld{\xi}}=e^{-\xi}e^{\xi+a-a^t}=\lb(\mathbf{1}\cdot e^{-\overleftarrow{\xi}+a-a^t}\label{diffformG}\rb).
	\ee
	We make the useful definition
	\begin{eqnarray}
		A_{M}^{\ N} (\xi) &\equiv& \partial_M \zeta^N\lb(\xi\rb)\ =\  -\SUL{n=1}{\infty}\frac{(-1)^n}{n!}\partial_M\ \lb(\left(\xi^P\ \partial_P\right)^{n-1}\ \xi^N\rb),\label{defA}
	\end{eqnarray}
	where $\zeta^N\lb(\xi\rb)$ is read from equation (\ref{zetaDef}). Further we make the definition
	\begin{eqnarray}
		B_{M}^{\ N}(\xi) &\equiv&\  \frac{\partial X^N}{\partial X'^M},
	\end{eqnarray}
	which implies that
	\begin{eqnarray}
		B(\xi) &=&\  \frac{1}{1-A(\xi)}\ =\ e^{-\xi}e^{\xi+a} =\lb(\mathbf{1} \cdot e^{-\overleftarrow{\xi}+a}\rb). \label{defM}
	\end{eqnarray}
	The equalities in the last line follow from equation (\ref{gctrans}) and the definition of $A$ as in equation (\ref{defA}). Details can be found in section 4.1 of \cite{LargeGT}.
	Using these definitions,  $\mathcal{F(\xi)}$ can be written in terms of the matrix $B(\xi)$. We suppress the explicit $\xi$ dependence here and write
	\begin{eqnarray}
		\mathcal{F} = \frac{1}{2}\lb(B\ \lb(B^{-1}\rb)^t+\lb(B^{-1}\rb)^t\ B\rb).\label{FasB}
	\end{eqnarray}
	
	Now we turn our attention to the finite transformations realized as the exponentiation of the generalized Lie derivative. As mentioned previously, the infinitesimal transformation of a vector field $V(X)$ is given by the action of the generalized Lie derivative, i.e.,
	\begin{eqnarray}
		V'(X)\ &=&\ V(X) + \widehat{\mathcal{L}}_{\xi} V(X),\end{eqnarray}
		with 
		\begin{eqnarray}\ \gld{\xi} V_{M}(X) \ &=&\ \xi^K\partial_KV_M(X)+\lb(\partial_M\xi^K-\partial^K\xi_M\rb)V_K(X).
		\end{eqnarray}
		or in the index free notation,
		\be 
		\gld{\xi} V(X) =\xi\cdot V(X) +\left(a\ -\ a^t\right) V(X) \ \  \text{with }\ a_{M}^{\ N}\ =\ \partial_M\xi^N.
		\ee
		Generalized Lie derivatives define a Lie algebra under commutation. We have
		\begin{eqnarray}
			\left[\gld{\xi_1}\ ,\ \gld{\xi_2}\right]=\  \gld{\lb[\xi_1,\ \xi_2\rb]_C},\label{gldLA}
		\end{eqnarray}
		where $\lb[\cdot , \cdot \rb]_C$ is the C-Bracket defined as follows.
		\be 
		\lb[\xi_1,\xi_2\rb]_{C}^{M}=\xi_1^P\ \partial_M\ \xi_2^P-\xi_2^P\ \partial_P\ \xi_1^M-\frac{1}{2}\lb(\xi_1^P\ \partial^M \xi_{2P}-\xi_2^P\ \partial^M\ \xi_{1P}\rb). \label{Cbrack}
		\ee
		The Jacobi identity also holds for the commutation relation (\ref{gldLA}) because the Jacobiator of three parameters $\xi_1,\ \xi_2,\ \xi_3$ is a trivial parameter and the generalized Lie derivative of a trivial parameter vanishes \cite{GenMetricForm,GaugeAlgebra}. This allows us to realize finite transformations, which form a group, by exponentiation of the generalized Lie derivative
		\be 
		V^{\prime}(X)=e^{\hat{\mathcal{L}}_{\xi}}\ V(X).\label{tranI}
		\ee
		By using the fact  $\lb(e^{\xi}f\rb)(X^{\prime})=f(X)$ \cite{LargeGT}, we can write
		\begin{eqnarray}
			V^{\prime}(X)&=&e^{\xi} V^{\prime}(X^{\prime})=e^{\xi} \mathcal{F} V(X) .\label{tranII}
		\end{eqnarray}
		If we na\"ively compare (\ref{tranI}) and (\ref{tranII}), we get the equality in (\ref{conj1}). However, it can be seen easily that the equality holds only for a very special class of parameters: the quasi-trivial parameters. For an arbitrary parameter, it holds only up to the second order in the parameter. However, this does not lead to any inconsistency as argued in \cite{LargeGT}. Due to the strong constraint, double field theory allows some extra freedom. This freedom can be exploited to change the parameter so that the transformation works out for the vector field. Transformation properties of the scalar field put some restrictions on the nature of the modified parameter. By definition, under any coordinate transformation $X\ \mapsto\  X'$,  a scalar field $\phi(X)$ transforms as
		\begin{eqnarray}
			\phi'\lb(X'\rb)&=&\phi\lb(X\rb),
		\end{eqnarray}
		which implies
		\begin{eqnarray}
			\phi'\lb(X\rb)\ =\ e^{\xi}\ \phi\lb(X\rb).\label{ScalarFinite}
		\end{eqnarray}
		We now see how the finite transformation of the scalar field is realized by exponentiating the generalized Lie derivative. Then we compare it with the transformation rule (\ref{ScalarFinite}). The action of the generalized Lie derivative on a scalar field $\phi(X)$ is given by 
		\begin{eqnarray}
			\gld{\xi}\phi(X)\ =\ \xi^P\partial_P\phi(X).
		\end{eqnarray}
		Therefore,
		\begin{eqnarray}
			\phi'(X)\ =\ e^{\hat{\mathcal{L}}_{\xi}}\phi(X)\ =\ e^{\xi}\phi(X),
		\end{eqnarray}
		which is consistent with (\ref{ScalarFinite}). We want the modified parameter $\xi'(\xi)$ to preserve this consistency, i.e., 
		\begin{eqnarray}
			e^{\xi'\lb(\xi\rb)}\phi'\lb(X\rb)\ =\ e^{\xi}\phi\lb(X\rb)
		\end{eqnarray}
		This condition can be satisfied if the following holds
		\begin{eqnarray}
			\xi'^P\partial_P \lb(\phi(X)\rb)=\xi^P\partial_P\lb(\phi(X)\rb).
		\end{eqnarray}
		It is possible to satisfy the above requirement if $\xi'\lb(\xi\rb)$ and $\xi$ differ by a quasi-trivial parameter, as defined earlier, and we aim to develop a procedure which determines such $\xi'\lb(\xi\rb)$ to any desired order in $\xi$.
		\subsection{Decomposition of $\mathcal{G}\lb(\xi'\lb(\xi\rb)\rb)$}\label{decG}
			
		In this section we will prove important results which will help us in deriving a differential equation for $\mathcal{G}\lb(\xi'\lb(\lambda\xi\rb)\rb)$ in section \ref{deqG}.  We will demonstrate how $\mathcal{G}\lb(\xi'(\xi)\rb)$ can be written as a product of $\mathcal{G}\lb(\xi\rb)$ and $\mathcal{G}\lb(\chi(\xi)\rb)$, where $\chi\lb(\xi\rb)$ is a quasi-trivial parameter. We will also derive an expression for $\xi'(\xi)$ in terms of $
		\xi$ and $\chi\lb(\xi\rb)$.
		\par For a quasi-trivial parameter $\delta$ we can use the strong constraint to obtain
		\be 
		\mathcal{G}(\delta)=e^{-\delta}e^{\gld{\delta}} \ =\ e^{\gld{\delta}} . \label{Gqt}
		\ee
		Note that due to the strong constraint we have $\gld{\delta}\ =\ \delta^P\partial_P +b-b^t\ =\ b-b^t\ $, where $b_{M}^{\ N}\ \equiv\ \partial_{M} \delta^{N}$. Both indices of $b$ are carried by derivatives, therefore, $b$, $b^t$ and $b-b^t$ are all nilpotent matrices and we can simplify equation (\ref{Gqt}) to get
		\be 
		\mathcal{G}\lb(\delta\rb)\ =\ e^{b-b^t}\ =\ \mathbf{1} + \ b-\ b^t\label{Gqt2}.
		\ee
		The relations (\ref{Gqt}) and (\ref{Gqt2}) for a quasi-trivial parameter will be used extensively in the rest of this paper.  Let us consider a collection of quasi-trivial parameters $ \delta_{1},\delta_{2},\cdots,$ and define $\lb(b_{i}\rb)_{M}^{\ N}\ \equiv\ \partial_{M} \delta_{i}^{N}$.  Using the strong constraint,  it is easy to see that the following identities hold. 
		\be  
		b_{i}b_{j}\ =\ b_{i}b_{j}^{t}\ =\ b_{i}^tb_{j}\ =\ b_{i}^tb_{j}^t\ =\ 0, \ \ \forall\ \ i,j\ .\ee
		Now we consider the product $\mathcal{G}\lb(\delta_1\rb)\mathcal{G}\lb(\delta_2\rb)\cdots $. By using equation (\ref{Gqt2}) this product can be written as follows.
		\begin{eqnarray}
			\mathcal{G}\lb(\delta_1\rb)\mathcal{G}\lb(\delta_2\rb)\cdots \ &=& \lb(1 + b_{1}-b_{1}^t\rb)\lb(1 + b_{2}-b_{2}^t\rb)\cdots\ , \\
			&=&\  1+b_{1}-b_{1}^t+b_{2}-b_{2}+\cdots\ , \\
			&=&\ e^{b_{1}+b_{2}+\cdots-\lb(b_{1}^t+b_{2}^t+\cdots\rb)}.
		\end{eqnarray}
		Since all $\delta_i$'s are quasi-trivial, the relation (\ref{Gqt}) implies that
		\be 
		\mathcal{G}\lb(\delta_1\rb)\mathcal{G}\lb(\delta_2\rb)\cdots \ =\ \mathcal{G}\lb(\delta_1+\delta_2+\cdots\rb).\label{lem1}
		\ee
		The equation (\ref{lem1}) will be useful in establishing our main result for this section.
		\begin{thm}\label{decThm}
			We have
			$
			\xi'\ =\ \xi+\delta$  with  $\delta$ quasi-trivial
			if and only if 
			\be  
			\mathcal{G}\lb(\xi'\rb)=\mathcal{G}(\xi)\mathcal{G}\lb(\chi\rb),\label{GpAnsatz}
			\ee
			with $\chi$ also quasi-trivial. Moreover,
			\be 
			\delta\ =\ \chi+\frac{1}{2}\lb[\xi,\chi\rb]_{C}+\frac{1}{12}\left(\lb[\xi,\lb[\xi,\chi\rb]_C\rb]_C+\lb[\chi,\lb[\chi,\xi\rb]_C\rb]_C\right)+\cdots\label{NewParDef}
			\ee
			up to the addition of trivial parameters. Here $\lb[\ .\ ,\ .\rb]_C$ is the \textit{C-Bracket} and `$\cdots$' contains nested C-brackets between $\xi$ and $\chi$ given by the BCH formula.\label{DecompThm}
		\end{thm}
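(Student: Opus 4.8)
The plan is to carry out the comparison at the level of operators acting on fields, where both sides of (\ref{GpAnsatz}) become products of exponentials that can be matched by the Baker--Campbell--Hausdorff (BCH) formula, and to organize everything as a formal power series in $\xi$. I will use: the definition (\ref{defG}), $\mathcal{G}(\mu)=e^{-\mu^P\partial_P}e^{\gld{\mu}}$; the linearity of the generalized Lie derivative, $\gld{\mu_1+\mu_2}=\gld{\mu_1}+\gld{\mu_2}$; the bracket identity (\ref{gldLA}), $[\gld{\mu_1},\gld{\mu_2}]=\gld{[\mu_1,\mu_2]_C}$ on fields; and, for any quasi-trivial $\nu$, the facts recalled around (\ref{Gqt}) that $\nu^P\partial_P=0$ as an operator on fields and therefore $\mathcal{G}(\nu)=e^{\gld{\nu}}$. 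The one genuinely new ingredient needed is the following statement, which I will call Lemma~A: \emph{if $\chi$ is quasi-trivial and $\mu$ is an arbitrary parameter, then $[\mu,\chi]_C$ is quasi-trivial; consequently every iterated $C$-bracket of parameters at least one of whose entries is quasi-trivial is again quasi-trivial.}

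I would prove Lemma~A by a direct expansion of (\ref{Cbrack}). Inserting $\chi^P=\sum_i\rho_i\partial^P\eta_i$, the term $\chi^P\partial_P\mu^M$ drops by the strong constraint (\ref{strongconst}), and after repeated use of $\partial^P\eta_i\,\partial_P(\cdots)=0$ together with the Leibniz rule every remaining contribution is either one whose free index already sits on a derivative --- hence manifestly of the form $\sum(\text{scalar})\,\partial^M(\text{scalar})$, i.e.\ quasi-trivial --- or one of exactly three copies of $\rho_i(\partial^P\eta_i)(\partial^M\mu_P)$, and these appear with coefficients $-1,\tfrac12,\tfrac12$ and cancel. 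The statement about iterated brackets then follows by induction, using antisymmetry of the $C$-bracket and the rank-one case at each step. (One can also see the first assertion quickly but less self-containedly: for any scalar $\phi$, $([\mu,\chi]_C)^P\partial_P\phi=\gld{[\mu,\chi]_C}\phi=[\gld{\mu},\gld{\chi}]\phi=\gld{\mu}(\chi^P\partial_P\phi)-\chi^P\partial_P(\gld{\mu}\phi)=0$, so $([\mu,\chi]_C)^P\partial_P$ annihilates all fields, which --- using the characterisation of quasi-trivial parameters as those $\nu$ with $\nu^P\partial_P=0$ on fields --- forces $[\mu,\chi]_C$ to be quasi-trivial.)

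Granting Lemma~A, I would argue the ``only if'' direction as follows. If $\xi'=\xi+\delta$ with $\delta$ quasi-trivial, then $\delta^P\partial_P=0$ on fields, so by linearity $\mathcal{G}(\xi')=e^{-\xi^P\partial_P}e^{\gld{\xi}+\gld{\delta}}=e^{-\xi^P\partial_P}e^{\gld{\xi}}\big(e^{-\gld{\xi}}e^{\gld{\xi}+\gld{\delta}}\big)$. Applying BCH to the parenthesized factor and using (\ref{gldLA}) on every nested commutator, it equals $e^{\gld{\chi}}$, where $\chi$ is the $C$-bracket BCH series of $-\xi$ and $\xi+\delta$; every term of that series carries at least one factor of $\delta$ (the $\delta$-free part being the BCH series of $-\xi$ and $\xi$, which vanishes), so by Lemma~A $\chi$ is quasi-trivial and $\mathcal{G}(\chi)=e^{\gld{\chi}}$. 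Hence $\mathcal{G}(\xi')=e^{-\xi^P\partial_P}e^{\gld{\xi}}e^{\gld{\chi}}=\mathcal{G}(\xi)\mathcal{G}(\chi)$. Conversely, for ``if'', given quasi-trivial $\chi$ one has $\mathcal{G}(\xi)\mathcal{G}(\chi)=e^{-\xi^P\partial_P}e^{\gld{\xi}}e^{\gld{\chi}}$, and BCH together with (\ref{gldLA}) gives $e^{\gld{\xi}}e^{\gld{\chi}}=e^{\gld{\xi+\delta}}$ with $\delta$ the $C$-bracket BCH series of $\xi$ and $\chi$ minus $\xi$ --- which is precisely (\ref{NewParDef}). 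Lemma~A makes this $\delta$ quasi-trivial, so $\delta^P\partial_P=0$, hence $e^{-\xi^P\partial_P}=e^{-(\xi+\delta)^P\partial_P}$ and $\mathcal{G}(\xi)\mathcal{G}(\chi)=\mathcal{G}(\xi+\delta)$. Thus whenever $\mathcal{G}(\xi')=\mathcal{G}(\xi)\mathcal{G}(\chi)$ we get $\mathcal{G}(\xi')=\mathcal{G}(\xi+\delta)$, and --- since $\mathcal{G}$ determines its argument up to a trivial parameter, as can be read off from its leading matrix part $\mathbf{1}+\partial_M\mu^N-\partial^N\mu_M+\cdots$ --- one concludes $\xi'=\xi+\delta$ modulo a trivial parameter, i.e.\ with quasi-trivial difference. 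In both directions the ``moreover'' formula (\ref{NewParDef}) is simply read off from the identity $e^{\gld{\xi}}e^{\gld{\chi}}=e^{\gld{\xi+\delta}}$.

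The hard part is Lemma~A, and in particular the cancellation of the three non-manifestly-quasi-trivial pieces of the $C$-bracket; this cancellation is the structural fact that makes the whole construction possible. A secondary point to handle with care is that the $C$-bracket obeys the Jacobi identity only up to trivial parameters and the generalized Lie derivative vanishes on trivial parameters, so the parameters $\chi$ and $\delta$ produced by the BCH manipulations are pinned down only modulo trivial parameters --- which is exactly the ``up to the addition of trivial parameters'' clause in (\ref{NewParDef}). Finally, because each $C$-bracket raises the order in $\xi$ by one, all BCH series are treated as formal power series and no analytic convergence question arises.
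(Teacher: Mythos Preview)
Your proof is correct and follows essentially the same route as the paper. Both directions hinge on the same two ingredients: the Lie-algebra relation $[\gld{\mu_1},\gld{\mu_2}]=\gld{[\mu_1,\mu_2]_C}$ and the fact that $C$-brackets involving a quasi-trivial parameter are again quasi-trivial (your Lemma~A, which the paper simply imports from section~5.1 of \cite{LargeGT}). The only cosmetic difference is in the forward direction: the paper splits $e^{\gld{\xi}+\gld{\delta}}$ via the Zassenhaus formula into $e^{\gld{\xi}}e^{\gld{\delta}}\prod_{i\ge2}e^{\gld{C_i}}$ and then invokes the preliminary identity $\mathcal{G}(\delta_1)\mathcal{G}(\delta_2)\cdots=\mathcal{G}(\delta_1+\delta_2+\cdots)$ for quasi-trivial $\delta_i$ to collapse the tail into a single $\mathcal{G}(\chi)$, whereas you write $e^{\gld{\xi}+\gld{\delta}}=e^{\gld{\xi}}\bigl(e^{-\gld{\xi}}e^{\gld{\xi}+\gld{\delta}}\bigr)$ and apply BCH once to the bracketed factor. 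Your variant is slightly more economical since it produces $\chi$ in one step, but the content is the same. Your extra remark that $\mathcal{G}$ determines its argument up to a trivial parameter is a point the paper leaves implicit; it is correct and worth stating, though the justification via ``the leading matrix part'' is a bit quick---a cleaner way is to note that $\mathcal{G}(\mu_1)=\mathcal{G}(\mu_2)$ forces $e^{\gld{\mu_1}}=e^{\gld{\mu_2}}$ (since the transport factors agree once the differences are quasi-trivial), hence $\gld{\mu_1-\mu_2}=0$, hence $\mu_1-\mu_2$ is trivial.
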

		\begin{proof}
					
			The proof consists of two parts. In the first part, we assume that $\xi'$ and $\xi$ differ by a quasi-trivial parameter and show that this leads to the decomposition (\ref{GpAnsatz}).
			Using the definition (\ref{defG}), we can write
			\begin{eqnarray}
				\mathcal{G}\lb(\xi'\rb)\ &=&\ e^{-\xi'}\ e^{\gld{\xi'}}\ =\ e^{-\xi-\delta}\ e^{\gld{\xi+\delta}}\ =\ e^{-\xi}\ e^{\gld{\xi}\ +\ \gld{\delta}},
			\end{eqnarray}
			where the last equality follows due to the strong constraint.
			We make use of the `Zassenhaus' formula \cite{Zhaus}, which can be written as follows.
			\be 
			e^{X+Y}\ =\ e^X\ e^Y \prod_{i\geq2} e^{C_{i}\lb(X,Y,\lb[\cdot,\cdot\rb]\rb)},
			\ee
			where $C_{i}\lb(X,Y,\lb[\cdot,\cdot\rb]\rb)$ is the $i$\textsuperscript{th} order Lie polynomial in $X$ and $Y$. The precise form of $C_{i}\lb(X,Y,\lb[\cdot,\cdot\rb]\rb)$ is not essential for our arguments here. However, the fact that it only involves commutators between $X$ and $Y$ is crucial. $C_{i}\lb(X,Y,\lb[\cdot,\cdot\rb]\rb)$ can be determined as explained in \cite{Zhaus}. For example, $C_{2}\lb(X,Y,\lb[\cdot,\cdot\rb]\rb)\ =\ -\frac{1}{2}\lb[X,Y\rb]$ and a higher order $C_{i}\lb(X,Y,\lb[\cdot,\cdot\rb]\rb)$ involves $i$ nested commutators between $X$ and $Y$. So we have
			\begin{eqnarray}
				\mathcal{G}\lb(\xi'\rb)\ =\ e^{-\xi}\ e^{\gld{\xi}}\ e^{\gld{\delta}}\prod_{i\geq2} e^{C_{i}\lb(\gld{\xi},\gld{\delta},\lb[\cdot,\cdot\rb]\rb)}.\label{DecInt}
			\end{eqnarray}
			From the Lie algebra of the generalized Lie derivatives, we know that \begin{eqnarray}
			\lb[\gld{\xi},\gld{\delta}\rb]\ &=&\ \gld{\lb[\xi,\delta\rb]_c}.\end{eqnarray}
			Therefore,
			\begin{eqnarray}
				C_{i}\lb(\gld{\xi},\gld{\delta},\lb[\cdot,\cdot\rb]\rb) \ &=&\ \gld{C_{i}\lb(\xi,\delta,\lb[\cdot,\cdot\rb]_c\rb)},
			\end{eqnarray} so we can write $\mathcal{G}\lb(\xi'\rb)$ as follows.
			\be 
			\mathcal{G}\lb(\xi'\rb)\ =\ e^{-\xi}\ e^{\gld{\xi}}\ e^{\gld{\delta}}\ \prod_{i\geq2} e^{\gld{C_{i}\lb(\xi,\delta,\lb[\cdot,\cdot\rb]_c\rb)}}.\label{DecInter}
			\ee
			It was shown in section 5.1 of \cite{LargeGT} that the C-bracket involving a quasi-trivial parameter is also quasi-trivial. This means that $\delta$ as well as all of the $C_{i}\lb(\xi,\delta,\lb[\cdot,\cdot\rb]_c\rb)$ are quasi-trivial and we can use the equation (\ref{Gqt}) to write
			\be 
			\mathcal{G}\lb(\xi'\rb)\ =\ \mathcal{G}\lb(\xi\rb)\ \mathcal{G}\lb(\delta\rb)\ \prod_{i\geq2} \mathcal{G}\lb(C_{i}\lb(\xi,\delta,\lb[\cdot,\cdot\rb]_c\rb)\rb).
			\ee
			Now, the result in equation (\ref{lem1}) implies that
			\begin{eqnarray}
				\mathcal{G}\lb(\xi'\rb)\ &=&\ \mathcal{G}\lb(\xi\rb)\ \mathcal{G}\big(\delta+ \sum_{i\geq2} C_{i}\lb(\xi,\delta,\lb[\cdot,\cdot\rb]_c\rb)\big), \\
				&\equiv&  \mathcal{G}\lb(\xi\rb)\ \mathcal{G}\lb(\chi \rb).
			\end{eqnarray}
			In the last step we have defined $\chi\ \equiv\ \delta\ + \sum_{i\geq2} C_{i}\lb(\xi,\delta,\lb[\cdot,\cdot\rb]_c\rb)$, which is clearly quasi-trivial. This completes the first part of the proof.
					
			Now we turn to the second part of the proof. We assume that the composition in (\ref{GpAnsatz}) holds and show that this leads to an expression for $\xi'$, which differs from $\xi$ by a quasi-trivial parameter, i.e., we show that $\delta$ is quasi-trivial.
			From the definition (\ref{defG}), the product $\mathcal{G}(\xi)\mathcal{G}(\chi)$ can be written as 
			\be 
			\mathcal{G}(\xi)\mathcal{G}(\chi)\ =\ e^{-\xi}e^{\gld{\xi}}e^{-\chi}e^{\gld{\chi}}\ =\ e^{-\xi}e^{\gld{\xi}}e^{\gld{\chi}},\label{Gform2}
			\ee
			where the second equality follows from the strong constraint. Now we use the well known Baker-Campbell-Hausdorff formula to write
			\be \label{HZId}
			e^{\gld{\xi}}e^{\gld{\chi}}\ =\ e^{\gld{\xi'}},
			\ee
			where 
			\be 
			\gld{\xi'}=\gld{\xi}+\gld{\chi}+\frac{1}{2}\left[\gld{\xi},\gld{\chi}\right] + \frac{1}{12}\left(\lb[\gld{\xi},\lb[\gld{\xi},\gld{\chi}\rb]\rb]+\lb[\gld{\chi},\lb[\gld{\chi},\gld{\xi}\rb]\rb]\right)\cdots,
			\ee
			where 	`$\cdots$' contains further nested commutators between $\gld{\xi}$ and $\gld{\chi}$. From the defining relation of the Lie algebra of the generalized Lie derivatives (\ref{gldLA}), we deduce that
			\be 
			\xi'\ =\ \xi+\chi+\frac{1}{2}\left[\xi,\chi\right]_c + \frac{1}{12}\left(\lb[\xi,\lb[\xi,\chi\rb]_c\rb]_c+\lb[\chi,\lb[\chi,\xi\rb]_c\rb]_c\right)+\cdots\ ,\label{detxip}
			\ee
			up to addition of trivial parameters. Here, `$\cdots$' contains further nested C-brackets between $\xi$ and $\chi$. We also note that in the above expression, $\xi'$ and $\xi$ differ by a quasi-trivial parameter because $\chi$ and C-brackets involving $\chi$ are all quasi-trivial. Using this result, we can write
			\begin{eqnarray}
				\mathcal{G}(\xi)\mathcal{G}(\chi)\ &=&\ e^{-\xi}e^{\gld{\xi'}}.
			\end{eqnarray}
			Due to the strong constraint, $ \xi'^P\partial_P=\xi^P\partial_P,$  and we have 
			\begin{eqnarray}
				\mathcal{G}(\xi)\mathcal{G}(\chi)\ &=&\ e^{-\xi'}e^{\gld{\xi'}}\ =\ \mathcal{G}\lb(\xi'\rb),
			\end{eqnarray}
			with $\xi' \ =\ \xi\ +\ \delta$. $\delta$ can be read off from (\ref{detxip}) and matches with the expression in equation (\ref{NewParDef}) up to addition of trivial parameters.
			This completes the proof of theorem (\ref{decG}).
					
		\end{proof}
			
		We note that  in the context of the conjecture (\ref{claim}), both $\delta$ and $\chi$ should be understood as functions of $\xi$, however results of this section hold in general. We conclude this section by simplifying $\mathcal{G}\lb(\xi'\rb)$ further. Since $\chi$ is a quasi-trivial parameter, we can use relation (\ref{Gqt2}) to write
		\begin{eqnarray}
			\mathcal{G}\lb(\chi\rb)&=& 1+\Delta\ -\ \Delta^t \ \text{ with }\ \Delta_{M}^{\;\;N}\ \equiv\partial_M\chi^N,\label{GchiDel}
		\end{eqnarray}
		so that
		\be 
		\mathcal{G}\lb(\xi'\rb)\ =\ \mathcal{G}\lb(\xi\rb)\ +\ \mathcal{G}\lb(\xi\rb)\lb(\Delta -\Delta^t \rb).\label{GxDel}
		\ee
			
		We mentioned in the introduction that the conjecture (\ref{claim}) will be proven by finding the parameter $\xi'\lb(\xi\rb)$. Equation (\ref{GxDel}) implies that our problem of finding $\xi'(\xi)$ has now reduced to finding $\Delta(\xi)$ or $\chi(\xi)$. If we can find a quasi-trivial $\chi\lb(\xi\rb)$, then the theorem proven in this section would guarantee that the HZ conjecture (\ref{claim}) holds.
		\section{Obtaining differential equations}\label{des}
		\subsection{Differential equation for $\mathcal{F}(\lambda\xi)$}\label{difF}
			
		In this section we will derive a differential equation for $\mathcal{F}$. To do this, we introduce a $\lambda$ dependence in the coordinate transformations by letting $\xi \rightarrow \lambda\xi$ so that we have
		\begin{eqnarray}
			X^{\prime^M}&=&X^M-\zeta^M(\lambda\xi)\  \equiv\  e^{-\lambda\xi^P\partial_P}X^M\label{XprimeL},
		\end{eqnarray}
		where the last equality defines $\zeta^M$ as a function of $\lambda$. Consequently, the matrices $A$ and $B$ defined earlier also become functions of $\lambda$ and we write
		\begin{eqnarray}
			A_M^{\;\;N}(\lambda)&\equiv&\partial_M\zeta^N\lb(\lambda\xi\rb),
			\label{Alambda}
		\end{eqnarray}
		and 
		\begin{eqnarray}
			B(\lambda)&\equiv&\frac{\partial X}{\partial X^{\prime}}=\frac{1}{1-A(\lambda)}\ =\ \lb(\mathbf{1} \cdot e^{\ \lambda\ \lb(-\overleftarrow{\xi}+a\rb)}\rb).
		\end{eqnarray}
		Here and in the rest of the paper, the matrix $a$ is to be understood as $a_{M}^{\ N}\ =\ \partial_M\xi^N$, i.e., without the $\lambda$ dependence because it has already been taken into account. The matrix $\mathcal{F}$ also becomes $\lambda$-dependent and we can express it as follows:
		\begin{eqnarray}
			\mathcal{F}(\lambda\xi)&=&\frac{1}{2}\left(B(\lambda)\lb(B^t(\lambda)\rb)^{-1}+\lb(B(\lambda)^t\rb)^{-1}B(\lambda)\right).
		\end{eqnarray}
			
		Now we are in a position to derive a first order differential equation for $\mathcal{F}(\lambda\xi)$ in $\lambda$. For brevity, explicit $\lambda$ dependence is suppressed in this section. From the definition of $B$, it is easy to see that $B(\lambda)$
		and $\lb(B^{t}\rb)^{-1}\lb(\lambda\rb)$ satisfy the differential equations
		\begin{eqnarray}
			\frac{dB}{d\lambda}\ =\ B\left(-\overleftarrow{\xi}+a\right)\ \ \ \text{and}\ \ \ 
			\frac{d\lb(B^{t}\rb)^{-1}}{d\lambda}\ &=&\ \lb(B^{t}\rb)^{-1}\left(-\overleftarrow{\xi}-a^t\right).
		\end{eqnarray}
		Straightforward applications of the chain rule then yields
		\begin{eqnarray}
			\frac{d}{d\lambda}\left(B\lb(B^{t}\rb)^{-1}\right)
			&=B\lb(B^{t}\rb)^{-1}\lb(-\overleftarrow{\xi}\rb)+B\ a\lb(B^{t}\rb)^{-1}+ B\lb(B^{t}\rb)^{-1}(a-a^t)-B\lb(B^{t}\rb)^{-1}a, 
			\\
			\frac{d}{d\lambda}\left(\lb(B^{t}\rb)^{-1}B\right)&=\lb(B^{t}\rb)^{-1} B\lb(-\overleftarrow{\xi}\rb)+\lb(B^{t}\rb)^{-1} B\lb(a-a^t\rb)+\lb(B^{t}\rb)^{-1}Ba^t-\lb(B^{t}\rb)^{-1}a^tB.
		\end{eqnarray}
		Now we make use of the following identities which are a consequence of the strong constraint: \be  a^tB=a^t,\  \lb(B^{t}\rb)^{-1}B=B+\lb(B^{t}\rb)^{-1}-1,\  \lb(B^{t}\rb)^{-1}a=a\cdot\ee Using these, we can simplify the above differential equations and combine them to get a differential equation for $\mathcal{F}(\lambda)$:
		\begin{eqnarray}
			\frac{d}{d\lambda}\mathcal{F}&=&\frac{d}{d\lambda}\frac{1}{2}\left(B\lb(B^{t}\rb)^{-1}+\lb(B^{t}\rb)^{-1}B\right),\\
			&=&\mathcal{F} \lb(-\overleftarrow{\xi}+a-a^t\rb)+\frac{1}{2}\left(B\ a\ \lb(B^{t}\rb)^{-1}-Ba+Ba^t-a^t\right).\label{dFdt1}
		\end{eqnarray}
		We  can also write
		\be 
		\lb(B^t\rb)^{-1} =\lb(\lb(\frac{1}{1-A}\rb)^t\rb)^{-1}=1-A^t,
		\ee
		such that
		\begin{eqnarray}
			Ba\lb(B^{t}\rb)^{-1}\  &=&\ Ba(1-A^t)=Ba-BaA^t,  \\
			Ba^t-a^t&=&B(a^t-B^{-1}a^t)=B(a^t-(1-A)a^t)=BAa^t. 
		\end{eqnarray}
		Plugging these relations into (\ref{dFdt1}), we finally arrive at a differential equation for $\mathcal{F}(\lambda)$.
		\begin{eqnarray}
			\boxed{\frac{d}{d\lambda}\mathcal{F}=\mathcal{F}(-\overleftarrow{\xi}+a-a^t)+\frac{1}{2}\ B\ \left(-aA^t+Aa^t\right).\label{dFdt2}}
		\end{eqnarray}
		This equation will play an important role in our analysis in rest of the paper. We will later obtain a differential equation for $\mathcal{G}$ which has a very similar form. We also re-write this differential equation in a more visually appealing form by using the strong constraint. By the strong constraint, $B^{t^{-1}}A=A$ and $\lb(B^{t}\rb)^{-1}a=a$ . This allows us to write
		\begin{eqnarray}
			\mathcal{F}\left(Aa^t-aA^t\right)&=&\frac{1}{2}\left(B\lb(B^{t}\rb)^{-1}(Aa^t-aA^t)+\lb(B^{t}\rb)^{-1}B(Aa^t-aA^t)\right), \\
			&=&\frac{1}{2}\left(B(Aa^t-aA^t)+(B+\lb(B^{t}\rb)^{-1}-1)(Aa^t-aA^t)\right), \\
			&=&\frac{1}{2}\left(B(Aa^t-aA^t)+Aa^t-Aa^t+B(Aa^t-aA^t)-Aa^t+aA^t\right), \\
			&=&B\left(Aa^t-aA^t\right).
		\end{eqnarray}
		Using this result in (\ref{dFdt2}) we get
		\begin{equation}
			\begin{split}
				\frac{d}{d\lambda}\mathcal{F}\ =\  &\mathcal{F}\ \left(-\overleftarrow{\xi}+a-a^t+\frac{1}{2}Aa^t-\frac{1}{2}aA^t\right).\label{dFdt3}
			\end{split}
		\end{equation}
		\par There appears to be no systematic way of solving the differential equation (\ref{dFdt3}) to obtain a closed form expression for $\mathcal{F}\lb(\lambda\xi\rb)$.
		\subsection{Differential Equation for $\mathcal{G}\lb(\xi'\lb(\lambda\xi\rb)\rb)$}\label{deqG}
		In this section, we derive a first order differential equation for $\mathcal{G}(\xi'(\lambda))$ by a procedure which is similar to the one used in section (\ref{difF}). We introduce a parameter dependence by $\xi\rightarrow \lambda \xi$.  Doing this makes $\xi^{\prime}(\xi)$, $\chi(\xi)$, $\Delta(\xi)$ and hence $\mathcal{G}\lb(\xi'\lb(\xi\rb)\rb)$ $\lambda$-dependent. We use the theorem (\ref{decThm}) and write the relations (\ref{GpAnsatz}) and (\ref{GchiDel}) with a $\lambda$-dependence.
		\begin{eqnarray}
			\widetilde{G}\lb(\lambda\rb)\ &\equiv &\  \mathcal{G}\lb(\xi'\lb(\lambda\xi\rb)\rb)\ =\ \mathcal{G}(\lambda\xi)\ \mathcal{G}\lb(\chi\lb(\lambda\xi\rb)\ \rb),\\
			\mathcal{G}\lb(\chi\lb(\lambda\xi\rb)\rb)\ &=&\  1+\Delta(\lambda\xi)-\Delta^t(\lambda\xi),\label{GchiDelL} \label{GpAnsatzL}
		\end{eqnarray}
		where we also made a slight change of notation by defining $\widetilde{G}(\lambda)$ .  Now we are in a position to obtain a differential equation for $\widetilde{G}(\lambda)$. We will suppress the explicit  $\lambda$ dependence from here on. Using the strong constraint, it is easy to show that the following identities hold.
		\be 
		a^t\Delta=a^t\Delta^t=\Delta a= \Delta^t a=0,\ \ 
		\mathcal{G} \Delta=B\Delta,\ \ 
		\mathcal{G}\Delta^t=B\Delta^t.
		\ee
		Since these identities are a consequence of the strong constraint and the index structure of the matrices $\Delta$ and $\Delta^t$, they also hold for their derivatives, $\frac{d\Delta}{d\lambda}$ and $\frac{d\Delta^t}{d\lambda}$ .
		Now we differentiate $\widetilde{\mathcal{G}}\lb(\lambda\rb)$ with respect to $\lambda$ to get
		\begin{eqnarray}
			\frac{d\widetilde{\mathcal{G}}}{d\lambda}&=&\frac{d\mathcal{G}}{d\lambda}\left(1+\Delta-\Delta^t\right)+\mathcal{G}\frac{d}{d\lambda}\left(\Delta-\Delta^t\right).
		\end{eqnarray}
		By letting $\xi\rightarrow \lambda\xi$ in equation(\ref{diffformG}) we see that,
		\be 
		\frac{d}{d\lambda}\mathcal{G}=\mathcal{G}\lb(-\overleftarrow{\xi}+a-a^t\rb).
		\ee
		Using this and the strong constraint identities recorded above, we obtain
		\begin{eqnarray}
			\frac{d\widetilde{\mathcal{G}}}{d\lambda}\ =\  \mathcal{G}\left(-\overleftarrow{\xi}+a-a^t \right)\left(1+\Delta-\Delta^t\right)+B\frac{d}{d\lambda}\left(\Delta-\Delta^t\right).
			\label{dgdleq1}
		\end{eqnarray}
		We want to put equation(\ref{dgdleq1}) in a form similar to (\ref{dFdt2}). To do that we need to compute $ \widetilde{\mathcal{G}} \left(-\overleftarrow{\xi}+a-a^t\right)$. After a straightforward but somewhat tedious computation one finds that
		\begin{eqnarray}
			\widetilde{\mathcal{G}} \left(-\overleftarrow{\xi}+a-a^t\right)
			=&&\mathcal{G}\left(-\overleftarrow{\xi}+a-a^t\right)\left(1+\Delta-\Delta^t\right)-B\left(\xi+a\right)\left(\Delta-\Delta^t\right)\nonumber \\ &&-B\left(\Delta-\Delta^t\right)a^t.
		\end{eqnarray}
		Using the last result in equation (\ref{dgdleq1}), we get the desired differential equation
		\begin{equation}
			\boxed{\frac{d\widetilde{\mathcal{G}}}{d\lambda}=\widetilde{\mathcal{G}}\left(-\overleftarrow{\xi}+a-a^t\right)+B\left(\left(\xi+a+\frac{d}{d\lambda}\right)\left(\Delta-\Delta^t\right)+ \left(\Delta-\Delta^t\right)a^t\right). \label{dgdl2}}
		\end{equation}
		Note that this equation has very similar form as the differential equation for $\mathcal{F}\lb(\lambda\rb)$ (\ref{dFdt2}). Now, our goal is to find $\Delta$ such that the two equations are exactly the same. This is the subject of the next section.
		\section{Solving for the parameter $\xi'\lb(\xi\rb)$}\label{solveXip}
		\subsection{Comparison and an iterative method}\label{Method}
			
		It is easy to see that $\mathcal{F}(\lambda\xi)$ and $\mathcal{G}\lb(\xi'\lb(\lambda\xi\rb)\rb)$ satisfy the same initial condition at $\lambda=0$, i.e., $\mathcal{F}(0)\ =\ \mathcal{G}\lb(\xi'\lb(0\rb)\rb)\ =\ \mathbf{1}$. For $\mathcal{F}(\lambda\xi)$ to be equal to $\mathcal{G}\lb(\xi'\lb(\lambda\xi\rb)\rb)$, they must satisfy the same differential equation. We compare the two differential equations, (\ref{dFdt2}) and (\ref{dgdl2}) to get
		\begin{eqnarray}
			\left(\xi+a+\frac{d}{d\lambda}\right)\left(\Delta-\Delta^t\right)+ \left(\Delta-\Delta^t\right)a^t=\frac{1}{2}\left(Aa^t-aA^t\right).\label{condDelta}
		\end{eqnarray}
		Without loss of generality we take $\Delta(\lambda)$ to be of the following form:
		\begin{eqnarray}
			\Delta(\lambda)=\SUL{n=1}{\infty}\lambda^n \Delta_{n},\label{deltaForm}
			\end {eqnarray}
			where $\Delta_{n}$ is a $\lambda$-independent term which is $n$\textsuperscript{th} order in $\xi$. Also it is obvious that we can't have a zeroth order term in $\Delta$.  Following the same notation we write
			\begin{eqnarray}
				A(\lambda)& \equiv &\SUL{n=1}{\infty}\lambda^n A_{n},
			\end{eqnarray}
			where $A_n$ can be read off from (\ref{XprimeL}) and (\ref{Alambda}). It takes the following form.
			\begin{equation}
				\lb(A_n\rb)_{M}^{\ \ N}=\partial_{M}\left(\zeta_{(n)}^N\right).\label{An}
			\end{equation}
			Here $\zeta_{(n)}^N$ is the $n$\textsuperscript{th} order term in $\xi$. We can read off $\zeta_{\lb(n\rb)}^N$ from equation (\ref{defA}) and \be \zeta_{\lb(n\rb)}^N\ =\ \frac{(-1)^{n+1}}{n!}\lb(\left(\xi^P\ \partial_P\right)^{n-1}\ \xi^N\rb).\ee
			Now, from (\ref{condDelta}) we get
			\begin{eqnarray}
				\boxed{(n+1)\left(\Delta_{n+1}-\Delta_{n+1}^{t}\right)+\left(\xi+a\right)\left(\Delta_n-\Delta_{n}^{t}\right)+\left(\Delta_{n}-\Delta_{n}^{t}\right)a^t=\frac{1}{2}\left(A_na^t-aA_{n}^{t}\right).}\label{Iterative}
			\end{eqnarray}
			The last expression is an iterative equation for $\Delta_{n}-\Delta_{n}^t$. There is a systematic procedure which can be used to obtain  $\xi^{\prime}$ to all orders. Before describing our procedure, we want to show that in general, $\Delta_n-\Delta_{n}^t$ obtained from equation (\ref{Iterative}) will be of the form
			\be 
			\lb(\Delta_n-\Delta_n^t\rb)_{M}^{\ \ N}=
			\partial_M f \partial^Ng-\partial_M g \partial^Nf,\label{delform}
			\ee
			where $f$ and $g$ are $\xi$-dependent quantities and they are of the form, $f\sim\xi^k\cdot\ \xi^P$\ and $g\sim \xi^{n-k-2}\cdot\xi_P$, where the powers are chosen so that the whole term is n\textsuperscript{th} order in $\xi$. 
					
			To show this, first note that from equation (\ref{Iterative}), it follows trivially that $\Delta_1-\Delta_{1}^t=\Delta_2-\Delta_{2}^t=0$, which can certainly be written in the form (\ref{delform}). Now we use inductive argument and suppose that $\Delta_{n}-\Delta_{n}^t$ is of the form (\ref{delform}), i.e.,
			\be 
			\Delta_n-\Delta_{n}^t\ \sim\ \partial_M\lb(\xi^k\cdot\ \xi^P\rb)\partial^N\lb( \xi^{n-k-2}\cdot\ \xi_P\rb) -\partial_M\lb(\xi^{n-k-2}\cdot\ \xi^P\rb)\partial^N\lb(\xi^k\cdot\ \xi^P\rb), 
			\ee
			then a short calculation yields
			\begin{eqnarray}
				&&\lb(\xi+a\rb)\lb(\Delta_n-\Delta_{n}^t\rb)+\lb(\Delta_n-\Delta_{n}^t\rb)a^t \\ &\sim&\partial_M\lb(\xi^{k+1}\cdot\xi^P\rb)\partial^N\lb(\xi^{n-k-2}\cdot\xi_P\rb) -\partial_M\lb(\xi^{n-k-2}\cdot\xi^P\rb)\partial^N\lb(\xi^{k+1}\cdot\xi^P\rb)  \\ &+& \partial_M\lb(\xi^k\cdot\xi^P\rb)\partial^N\lb( \xi^{n-k-1}\cdot\xi_P\rb) -\partial_M\lb(\xi^{n-k-1}\cdot\xi^P\rb)\partial^N\lb(\xi^k\cdot\xi^P\rb), \\
				&\sim&\  \partial_M\lb(\xi^k\cdot\xi^P\rb)\partial^N\lb( \xi^{n-k-1}\cdot\xi_P\rb) -\partial_M\lb(\xi^{n-k-1}\cdot\xi^P\rb)\partial^N\lb(\xi^k\cdot\xi^P\rb),
			\end{eqnarray}
			which is of the form (\ref{delform}). We can also read off the RHS of equation (\ref{Iterative}) and see that:
			\be 
			A_{n}a^t- a A_{n}^t\ \sim\ \partial_M\lb(\xi^{n-1}\cdot\xi^P\rb)\partial^N\xi_P-\partial_M\xi^P\partial^N\lb(\xi^{n-1}\cdot\xi_P\rb).
			\ee
			This is also of the form (\ref{delform}). Since all the terms in (\ref{Iterative}) are of the form (\ref{delform}), we deduce that $\Delta_{n+1}-\Delta_{n+1}^t$ will also be of the form (\ref{delform}) and this completes our inductive argument.
					
			\par In the following, we outline our procedure to obtain $\xi'\lb(\xi\rb)$.
			\begin{itemize}
				\item Using the fact that $\Delta_{0}-\Delta_{0}^t=0$, the equation (\ref{Iterative}) can be solved iteratively to find all $\Delta_{n}-\Delta_{n}^t$, which will be of the form (\ref{delform}) .
				\item Only $\Delta_n-\Delta^{t}_n$ can be found uniquely from equation ( \ref{Iterative}) and there are many different choices for $\Delta_n$ which differ from one another by symmetric matrices. We can choose \be \lb(\Delta_n\rb)_{M}^{\ \ N}=\partial_M\lb(f\partial^Ng\rb).\ee
				\item Using (\ref{deltaForm}) one can obtain $\Delta(\lambda\xi)$ to all orders in $\xi$ and by setting $\lambda=1$, we get $\Delta(\xi)$.
				\item Since all $\Delta_n$'s are total derivatives, $\Delta(\xi)$ will also be a total derivative therefore \be  \Delta_{M}^{\;\;N}=\partial_M\chi^N,\ee can be solved to obtain an expression for $\chi(\xi)$. Note that $\chi\lb(\xi\rb)$ is determined only up to the addition of trivial terms.
				\item By using (\ref{NewParDef}), one can find $\delta(\xi)$ to all orders in $\xi$. Since $\xi'\lb(\xi\rb)\ =\ \xi+\delta\lb(\xi\rb)$, we can finally obtain $\xi'\lb(\xi\rb)$ to all orders in $\xi$ and this was our aim.
			\end{itemize}
			We also note that this procedure does not give a unique $\xi'$ and it is determined only up to the addition of trivial parameters. This non uniqueness arises precisely because of the fact that the generalized Lie derivative with respect to a trivial parameter is zero.
			\subsection{Order by Order checks}\label{checks}
			In this section, we demonstrate how the procedure outlined above can be used systematically to obtain  $\xi^{\prime}\lb(\xi\rb)$ to quintic order in $\xi$. Recall that $\xi'\lb(\xi\rb)\ =\ \xi+\delta\lb(\xi\rb)$, and we only need to find out $\delta\lb(\xi\rb)$. We will show that our results here are in agreement with the computation of $\delta(\xi)$ done previously in \cite{LargeGT} to quartic order. The quintic order result is also verified explicitly.
			\par Before proceeding further, we would introduce some useful notations here.
			\begin{itemize}
				\item We will represent matrices of the form  `$f_M\ g^N$' in an index free manner as follows.
				      \be \lb(B\rb)_{M}^{\ \ N}=f_{M}g^N \ \equiv\ \vec{f}\ \vec{g}, \ee
				      i.e., in an expression, the terms carrying free matrix indices are represented with an overhead arrow. 
				\item If the free vector index of a parameter is carried by a partial derivative, i.e., $V^M= A\  \partial^M \ B,$ we will denote it as follows.
				      \be  V=A\ \vec{\partial}\ B.\ee
				\item We will often use the following short hand notation. \be  A-A^t \equiv \left( A- \left(\cdots \right)^t\right).\ee
				\item We will use $\chi_n$ to denote the $\bigO(\xi^n)$ term appearing in $\chi(\xi)$, i.e., 
				      \be 
				      \chi(\xi)\equiv \chi_1+\chi_2+\chi_3 + \cdots\ .
				      \ee
				      It is then trivial to see that $\lb(\Delta_{n}\rb)_{M}^{\ N}=\partial_M \chi_{n}^{\ N}$.
				\item We will compute $\delta(\xi)$ upto the quintic order. In the analysis to follow, we will also find that $\chi_1=\chi_2=0$. Using this and the equation (\ref{NewParDef}) we can write $\xi'(\xi)$ up to quintic order as follows.
				      \be  \label{xiPQuint}
				      \delta(\xi)=\chi_3+\left(\chi_4 +\frac{1}{2}\left[\xi,\chi_3\right]_c\right)+\left(\chi_5 +\frac{1}{2}\left[\xi,\chi_4\right]_c+\frac{1}{12}\left[\xi,\left[\xi,\chi_3\right]_c\right]_c\right) + \bigO(\xi^6)\ .
				      \ee
			\end{itemize}
			\subsubsection{First and Second Order}
			First and second order checks are trivial. By expanding $\mathcal{G}(\xi)$ and $\mathcal{F}(\xi)$ to second order it can bee seen that they are equal. We will obtain the same conclusion by using our procedure.
			\par Using the fact that $A$ does not have a zeroth order term, i.e., $A_{0}$ vanishes, it is trivial to see that $\Delta_{1}-\Delta_{1}^t=0$ from (\ref{Iterative}). Without loss of generality we can choose $\Delta_1\ =\ 0\ \Rightarrow \  \chi_1\ =\ 0$, which implies that
			\begin{eqnarray}
				\delta^{ M}\lb(\xi\rb)&=& 0+ \bigO(\xi^2)\ .
			\end{eqnarray}
			For the second order case, we see that for $n=1$, equation (\ref{Iterative}) becomes 
			\be  
			2\left(\Delta_2-\Delta_{2}^{t}\right)=\frac{1}{2}\left(A_1a^t-aA_{1}^t\right)\ .
			\ee
			From (\ref{An}) we see that
			$ A_1=a
			$, which implies that
			$
			\Delta_2-\Delta_{2}^{t}=0\ .$ Without loss of generality, we have $\Delta_2=0\Rightarrow \chi_2=0,$ which implies
			\begin{eqnarray}
				\delta^M\lb(\xi\rb)&=&0 + \bigO(\xi^3)\ .
			\end{eqnarray}
			This shows that $\mathcal{F}(\xi)$ and $\mathcal{G}(\xi)$ are equal up to quadratic order in $\xi$, in agreement with \cite{LargeGT}.
			\subsubsection{Cubic Order}
			First and quadratic order computations were trivial in the sense that we found $\delta\lb(\xi\rb)=0+\mathcal{O}(\xi^3)$. Things get more interesting at the cubic order as we will get non-trivial results for $\delta\lb(\xi\rb)$ and hence for $\xi'\lb(\xi\rb)$.
			\begin{itemize}
				\item For $n=2$, from (\ref{Iterative}), we get 
				      \be 
				      3\left(\Delta_3-\Delta_{3}^{t}\right)=\frac{1}{2}\left(A_2a^t-aA_{2}^t\right)\label{iter3}.
				      \ee
				      $A_2$ can be read off from equation (\ref{An}). After a short computation we obtain the following:
				      \begin{eqnarray}
				      	\left(A_2a^t-aA_2^t\right)_{M}^{N}&=&-\frac{1}{2}\left(\partial_M\left(\xi\cdot\xi^P\right)\partial^N\xi_P-\partial_M\xi^P\partial^N\left(\xi\cdot\xi_P\right)\right),\\
				      	&=&-\frac{1}{2}\left(\partial_M\left(\left(\xi\cdot\xi^P\right)\partial^N\xi_P\right)-\partial^N\left(\left(\xi\cdot\xi^P\right)\partial_M\xi_P\right)\right),\label{A2at}
				      \end{eqnarray}
				      where in the last step, we added a symmetric part to the both terms so that they can written as total derivatives. Now we use the  notation introduced at the beginning of this subsection to write
				      \begin{eqnarray}
				      	A_2a^t-aA_{2}^t=-\frac{1}{2}\left(\vpa\left((\xi\cdot\xi^P)\vpa\xi_P\right)-\left(\cdots\right)^t\right).
				      \end{eqnarray}
				      Using this in equation (\ref{iter3}) we get:
				      \be 
				      \Delta_3-\Delta_{3}^t=-\frac{1}{12}\left(\vpa\left((\xi\cdot\xi^P)\vpa\xi_P\right)-\left(\cdots\right)^t\right).
				      \ee
				\item Now we can choose
				      \be 
				      \Delta_3=-\frac{1}{12}\vpa\left((\xi\cdot\xi^P)\vpa\xi_P\right)\ \ \Rightarrow\ \  \chi_3=-\frac{1}{12}(\xi\cdot\xi^P)\vpa\xi_P .\label{chi3}
				      \ee
				\item From equation (\ref{xiPQuint}), we can write down $\delta\lb(\xi\rb)$ to cubic order in $\xi$ as follows.
				      \be \delta^{M}(\xi)=-\frac{1}{12}(\xi\cdot\xi^P)\partial^M\xi_P +\bigO(\xi^4) .\ee 
				      So we deduce that
				      \be 
				      \mathcal{F}(\xi)=\mathcal{G}\lb(\xi-\frac{1}{12}(\xi\cdot\xi^P)\vpa\xi_P\rb)+\bigO(\xi^4).\ee
				      This relationship can be inverted easily to get
				      \be 
				      \mathcal{F}\lb(\xi+\frac{1}{12}(\xi\cdot\xi^P)\vpa\xi_P\rb)=\mathcal{G}(\xi)+\bigO(\xi^4).
				      \ee
				      This is in complete agreement with results of section(4) of \cite{LargeGT}.
			\end{itemize}
			\subsubsection{Quartic Order}
			Now we turn to compute $\delta(\xi)$ to quartic order in $\xi$. In \cite{LargeGT}, it was shown by explicit computation  that there is no quartic order term in $\delta(\xi)$. We will see that our procedure reproduces this result through very non trivial cancellation between quartic order terms.
			\begin{itemize}
				\item For $n=3$ the equation (\ref{Iterative}) becomes
				      \be  
				      4\left(\Delta_{4}-\Delta_{4}^{t}\right)+\left(\xi+a\right)\left(\Delta_3-\Delta_{3}^{t}\right)+\left(\Delta_{3}-\Delta_{3}^{t}\right)a^t=\frac{1}{2}\left(A_3a^t-aA_{3}^{t}\right).\label{Iter4}
				      \ee
				      Let us compute different parts of the above equation now. All the computations are fairly straightforward and the final results are presented here.
				      \begin{eqnarray}
				      	A_3a^t-aA_{3}^t&=&\frac{1}{6}\left(\vpa\left((\xi^2\cdot\xi^P)\vpa\xi_P\right)-\left(\cdots\right)^t\right),\label{A3}\\
				      	\left(\xi+a\right) \Delta_3 &=&\vpa\lb(\xi^2\cdot\xi^K\rb)\vpa\xi_K+\vpa\lb(\xi\cdot\xi^K\rb)\vpa\lb(\xi\cdot\xi_K\rb)-\vpa\lb(\xi\cdot\xi^K\rb)\vpa\xi^P\partial_P\xi_K,\\
				      	\lb(\xi+a\rb)\Delta_{3}^t\ &=&\ \vpa\lb(\xi\cdot\xi_K\rb)\vpa\lb(\xi\cdot\xi^K\rb)+\vpa\xi_K\vpa\lb(\xi^2\cdot\xi^K\rb)-\vpa\xi_K\vpa\xi^P\partial_P\lb(\xi\cdot\xi^K\rb),\\
				      	\lb(\Delta_3-\Delta_{3}^{t}\rb)a^t&=& \vpa\lb(\xi\cdot\xi^K\rb)\vpa\xi^P\partial_P\xi_K-\vpa\xi_K\vpa\xi^P\partial_P\lb(\xi\cdot\xi^K\rb),
				      \end{eqnarray}
				      We can determine $\Delta_4-\Delta_{4}^t$ by using these results in equation (\ref{Iter4}).
				      \begin{eqnarray}
				      	\Delta_4-\Delta_{4}^t&=&\frac{1}{24}\lb(\vpa\lb(\lb(\xi^2\cdot\xi^K\rb)\vpa\xi_K\rb)-\lb(\cdots\rb)^t\rb).
				      \end{eqnarray}
				\item Again, without loss of generality, we can choose
				      \be 
				      \Delta_4=\frac{1}{24}\vpa\lb(\lb(\xi^2\cdot\xi^K\rb)\vpa\xi_K\rb)\ \ \ \Rightarrow \chi_4 =\frac{1}{24}\lb(\xi^2\cdot\xi^P\rb)\vpa\xi_P\ . \ee
				      Note that we have a non trivial quartic order term.
				\item Now 
				      \be 
				      \delta(\xi)=\chi_3+\chi_4+\frac{1}{2}\left[\xi,\chi_3\right]_C+\bigO(\xi^5),
				      \ee 
				      the C-bracket, $\left[\xi,\chi_3\right]_C$ can be computed easily by using the value of $\chi_3$ given in equation (\ref{chi3}) and the definition of the C-bracket (\ref{Cbrack}). After a straightforward computation and some index relabeling it is easy to show that
				      \begin{eqnarray}\
				      	\frac{1}{2}\lb[\xi\ , \chi_3\rb]_C&=&-\frac{1}{24}\lb(\xi^2\cdot\xi^P\rb)\vec{\partial}\xi_P =-\chi_4\ .
				      \end{eqnarray}
				      Hence we see that the quartic order contributions from $\chi_4$ and $\frac{1}{2}\lb[\xi,\chi_3\rb]_c$ cancel each other and we get
				      			      
				      \be 
				      \delta^{M}(\xi)=-\frac{1}{12}\lb(\xi\cdot\xi^P\rb)\partial^M\xi_P + \bigO(\xi^5).
				      \ee
			\end{itemize}
			which agrees with the quartic order result of \cite{LargeGT}.
			\subsubsection{Quintic Order}
			Now we turn to the quintic order computation for $\xi'\lb(\xi\rb)$. In contrast with the previous orders, this computation has not been done before. Here we will determine $\xi'(\xi)$ to fifth order in $\xi$ and then check our result by explicitly expanding  $\mathcal{F}(\xi)$ and $\mathcal{G}\lb(\xi'(\xi)\rb)$ to the relevant order. 
					 
			\begin{itemize}
				\item For $n=4$, the equation (\ref{Iterative}) becomes:
				      \be 
				      5\lb(\Delta_5-\Delta_{5}^t\rb)+\lb(\xi+a\rb)\lb(\Delta_4-\Delta_{4}^t\rb)+\lb(\Delta_4-\Delta_{4}^t\rb)a^t=\frac{1}{2}\lb(A_{4}a^t-aA_{4}^{t}\rb).
				      \ee
				      This equation can be solved for $\Delta_5 -\Delta_{5}^{t}$ by similar kind of computation as was done for the quartic order. Key results are summarized below:
				      \begin{eqnarray}
				      	\lb(\xi+a\rb) \ \Delta_4 &=& \frac{1}{24}\lb(\vec{\partial}\ \lb( \xi^3\cdot \xi^P\rb)\vec{\partial}\xi_P+\vec{\partial}\lb(\xi^2\cdot\xi^P\rb)\vec{\partial}\xi_P-\vec{\partial}\lb(\xi^2\cdot\xi^P\rb)\vec{\partial}\ \xi^Q\ \partial_Q\ \xi_P\rb),\ \ \ \ \ \ \ \   \\ 
				      	-\lb(\xi+a\rb) \ \Delta_{4}^t &=&- \frac{1}{24}\lb(\vec{\partial}\ \lb( \xi\cdot\xi^P\rb)\vec{\partial}\lb(\xi^2\cdot\xi_P\rb)+\vec{\partial}\ \xi^P\ \vec{\partial}\lb(\xi^3\cdot\xi_P\rb)\rb)\nonumber \\ && +\frac{1}{24}-\lb(\vec{\partial}\ \xi^P\vec{\partial}\ \xi^Q\ \partial_Q\ \lb(\xi^2\cdot\xi_P\rb)\rb),   \\
				      	\lb(\Delta_4-\Delta_{4}^{t}\rb)\ a^t&=& \frac{1}{24}\lb(\vec{\partial}\lb(\xi^2\cdot\xi^Q\rb)\ \partial^P\ \xi_Q\ \vec{\partial}\ \xi_P-\vec{\partial}\ \xi^Q\ \partial^P\ \lb(\xi^2\cdot \xi_Q\rb)\vec{\partial}\ \xi_P\rb), \\
				      	A_4\ a^t-a\ A_{4}^t &=&-\frac{1}{24}\lb(\vec{\partial}\lb(\xi^3\cdot \xi^P\rb)\vec{\partial}\ \xi_P - \vec{\partial}\  \xi^P\ \vec{\partial}\lb(\xi^3\cdot \xi_P\rb)\rb), \\
				      	\Delta_5-\Delta_{5}^{t}&=&\frac{-1}{240}\lb( 3\ \vec{\partial}\lb(\xi^3\cdot \xi^P\rb)\vec{\partial}\ \xi_P +2\ \vec{\partial}\lb(\xi^2\cdot\xi^P\rb)\vec{\partial}\lb(\xi\cdot\xi_P\rb) -\lb(\cdots\rb)^t\rb)
				      \end{eqnarray}
				\item 
				      Now we follow the familiar procedure and make a choice of $\Delta_5$ which is consistent with the result found for $\Delta_5-\Delta_{5}^t$. We choose \footnote{ A straightforward choice would have been to take \be \Delta_5 =\frac{-1}{240}\vec{\partial}\lb( 3\ \lb(\xi^3\cdot \xi^P\rb)\ \vpa \xi_P +2\ \lb(\xi^2\cdot\xi^P\rb)\vec{\partial}\lb(\xi\cdot\xi_P\rb)\rb).\ee
				      	This choice is perfectly fine but leads to an expression for $\delta(\xi)$ which is rather cumbersome.}
				      \begin{eqnarray}
				      	\Delta_5 = \frac{-1}{240}\vec{\partial}\lb( 3\ \lb(\xi^3\cdot \xi^P\rb)\vec{\partial}\ \xi_P +\ \lb(\xi^2\cdot\xi^P\rb)\vec{\partial}\lb(\xi\cdot\xi_P\rb) - \ \lb(\xi\cdot\xi^P\rb)\vec{\partial}\lb(\xi^2\cdot\xi_P\rb) \ \rb).
				      \end{eqnarray}
				      			      
				\item From $\Delta_5$ we can read off 
				      \begin{eqnarray}
				      	\chi_5&=&\frac{-1}{240}\lb( 3\ \lb(\xi^3\cdot \xi^P\rb)\vec{\partial}\ \xi_P +\ \lb(\xi^2\cdot\xi^P\rb)\vec{\partial}\lb(\xi\cdot\xi_P\rb) - \ \lb(\xi\cdot\xi^P\rb)\vec{\partial}\lb(\xi^2\cdot\xi_P\rb) \ \rb).
				      \end{eqnarray}
				\item We use the fact that $\ \frac{1}{2}\left[\xi,\chi_3\right]_c=-\chi_4$ in equation (\ref{xiPQuint}) to get
				      \begin{eqnarray}
				      	\delta(\xi)=\chi_3+\chi_5+\frac{1}{3}\left[\xi,\chi_4\right]_c+\bigO(\xi^6).
				      \end{eqnarray}
				      After a straightforward computation one finds
				      \begin{eqnarray}
				      	\left[\xi,\chi_4\right]_c  &=&\ \frac{1}{24}\lb(\xi^3\cdot\xi^P\rb)\ \vec{\partial}\xi_P +\frac{1}{48}\lb(\lb(\xi^2\cdot\xi^P\rb)\vec{\partial}\lb(\xi\cdot\xi_P\rb)-\lb(\xi\cdot\xi^P\rb)\vec{\partial}\lb(\xi^2\cdot\xi_P\rb)\rb).
				      \end{eqnarray}
				      Notice that all three terms appearing in $\left[\xi,\chi_4\right]_c$ are also present in $\chi_5$. These terms add up nicely to give
				      \begin{empheq}{align}
				      	\ \ \delta^{M}(\xi)=-\frac{1}{12}\lb(\xi\cdot\xi^P\rb)\partial^M\xi_P +\frac{1}{720}\bigg(& \lb(\xi^3\cdot\xi^P\rb)\ \partial^M\xi_P +\lb(\xi^2\cdot\xi^P\rb)\partial^M\lb(\xi\cdot\xi_P\rb) \nonumber\\&-\lb(\xi\cdot\xi^P\rb)\partial^M\lb(\xi^2\cdot\xi_P\rb)\bigg)+\bigO\lb(\xi^6\rb). 
				      \end{empheq}
			\end{itemize}
			It can be verified explicitly that this value of $\delta(\xi)$  indeed satisfies $\mathcal{F}(\xi)=\mathcal{G}(\xi'(\xi))$ up to quintic order. It was done using symbolic manipulations in Mathematica. Let $\xi_{(n)}^M$ denote the $n$\textsuperscript{th} order term in $\xi'^{M}$. Then the fifth  order contribution in $\mathcal{G}(\xi')$ due to $\xi_{(5)}^M$ is simply given by $\partial_M \xi_{(5)}^N-\partial^N\xi_{(5)M}\equiv \widetilde{\Delta}_{M}^{\ \ N}$, i.e., \be \mathcal{G}(\xi+\xi_{(3)}+\xi_{(5)})=\mathcal{G}(\xi+\xi_{(3)})+\widetilde{\Delta}+\bigO(\xi^6).\ee 
			To show that $\mathcal{F}\lb(\xi\rb)\ =\ \mathcal{G}\lb(\xi+\xi_{(3)}+\xi_{(5)}\rb)+\bigO(\xi^6)\ $, we first expanded $\mathcal{F}\lb(\xi\rb)$ and $\mathcal{G}\lb(\xi+\xi_{(3)}\rb)$ to the quintic order. Then we computed the difference $ \mathcal{F}\lb(\xi\rb)-\mathcal{G}\lb(\xi+\xi_{(3)}\rb)$ up to quintic order and noticed that this difference equals $\widetilde{\Delta}$.
					 
			\section{Determining $\xi^{\prime}$ to all orders} \label{Allord}
					
			From our results in the last section, we have a systematic way of finding $\xi'\lb(\xi\rb)$ to any desired order in $\xi$. This really involves two key steps, first we find $\chi\lb(\xi\rb)$ using the iterative equation (\ref{Iterative}) and then we use the BCH formula (\ref{NewParDef}) to obtain $\xi'\lb(\xi\rb)$. In this section, we complete the first step of this computation to all orders in $\xi$, i.e. we solve for $\chi\lb(\xi\rb)$ to all orders. From equation (\ref{Iterative}), only $\Delta_n-\Delta_{n}^t$ can be obtained uniquely. Here we obtain a formula for $\Delta_n-\Delta_{n}^t$ for any integer $n$. Then $\Delta\lb(\xi\rb)$ and $\chi\lb(\xi\rb)$ can be obtained using their definitions. As argued earlier, $\chi\lb(\xi\rb)$ is determined only up to the addition of trivial parameters. This ambiguity in $\chi\lb(\xi\rb)$ is not important because of the fact that the generalized Lie derivative with respect to a trivial parameter is zero.
			\par From our previous analysis, we expect $\Delta_{n}$ to be an $n$\textsuperscript{th} order matrix function of $\xi$ whose free indices are carried by the partial derivatives. Then it is easy to see the most general form for $\Delta_n-\Delta_{n}^t$ is the following:
			\be 
			\Delta_{n}-\Delta_{n}^t\ =\ \SUL{k=0}{k=n-2} \alpha_{n,k}\lb(\  \vec{\partial}\lb(\xi^{k}\cdot\xi^P\rb)\ \vec{\partial}\lb(\xi^{n-k-2}\cdot \xi_P\rb)\ -\ \lb(\cdots\rb)^t \rb).\label{delAns}
			\ee
			where we have used the notation $A-A^t=\lb(A-\lb(\cdots\rb)^t\rb)$, and $\alpha_{n,k}$ is some $n$ and $k$ dependent coefficient, which we want to find so that the ansatz (\ref{delAns}) satisfies equation (\ref{Iterative}). We record equation (\ref{Iterative}) here:
			\be 
			(n+1)\left(\Delta_{n+1}-\Delta_{n+1}^{t}\right)+\left(\xi+a\right)\left(\Delta_n-\Delta_{n}^{t}\right)+\left(\Delta_{n}-\Delta_{n}^{t}\right)a^t=\frac{1}{2}\left(A_na^t-aA_{n}^{t}\right). \label{Iter2}
			\ee
			Now let us compute different components of the above equation for the ansatz (\ref{delAns}). By straightforward computations one obtains the following.
			\begin{eqnarray}
				\lb(\xi+a\rb)\ \lb(\Delta_{n}-\Delta_{n}^t\rb)\ = \SUL{k=0}{k=n-2} &&\alpha_{n,k}\lb(\  \vec{\partial}\lb(\xi^{k+1}\cdot\xi^P\rb)\ \vec{\partial}\lb(\xi^{n-k-2}\cdot \xi_P\rb)\ -\ \lb(\cdots\rb)^t \rb) \nonumber\\ 
				&+& \alpha_{n,k}\lb(\  \vec{\partial}\lb(\xi^{k+1}\cdot\xi^P\rb)\ \vec{\partial}\lb(\xi^{n-k-1}\cdot \xi_P\rb)\ -\ \lb(\cdots\rb)^t \rb) \nonumber \\
				&-& \alpha_{n,k}\  \vec{\partial}\lb(\xi^{k}\cdot\xi^P\rb)\ \vec{\partial}\xi^Q\partial_Q\lb(\xi^{n-k-2}\cdot \xi_P\rb) \nonumber  \\ &+& \alpha_{n,k}\  \vec{\partial}\lb(\xi^{n-k-2}\cdot\xi^P\rb)\ \vec{\partial}\xi^Q\partial_Q\lb(\xi^{k}\cdot \xi_P\rb),
			\end{eqnarray}
			and 
			\begin{eqnarray}
				\lb(\Delta_n-\Delta_{n}^t\rb) a^t\ =\  \SUL{k=0}{k=n-2}&&\alpha_{n,k}\  \vec{\partial}\lb(\xi^{k}\cdot\xi^P\rb)\ \vec{\partial}\xi^Q\partial_Q\lb(\xi^{n-k-2}\cdot \xi_P\rb) \nonumber \\ &-& \alpha_{n,k}\  \vec{\partial}\lb(\xi^{n-k-2}\cdot\xi^P\rb)\ \vec{\partial}\xi^Q\partial_Q\lb(\xi^{k}\cdot \xi_P\rb)\cdot
			\end{eqnarray}
			Now we add the last two results along with the term $\lb(n+1\rb)\lb(\Delta_{n+1}-\Delta_{n+1}^t\rb)$ to obtain the $LHS$ of equation (\ref{Iter2}). After a straightforward relabeling of the summation index $k$ we obtain
			\begin{eqnarray}
				LHS\ =\ &&\SUL{k=1}{k=n-1}\lb(\lb(n+1\rb)\alpha_{n+1,k}+\alpha_{n,k}+\alpha_{n,k-1}\rb)\lb(\vec{\partial}\lb(\xi^{k}\cdot \xi^P\rb)\vec{\partial}\lb(\xi^{n-k-1}\cdot\xi_P\rb)-\lb(\cdots\rb)^t\rb)
				\nonumber\\ && +\lb(\lb(n+1\rb)\alpha_{n+1,0}+\alpha_{n,0}+\alpha_{n,n-1}\rb)\lb(\vec{\partial}\lb(\xi^P\rb)\vec{\partial}\lb(\xi^{n-1}\cdot\xi_P\rb)-\lb(\cdots\rb)^t\rb)\cdot
			\end{eqnarray}
			The $RHS$ of equation (\ref{Iter2}) can be written by using the definition of $A_n$
			\begin{eqnarray}
				RHS\ =\ \frac{1}{2}\frac{\left(-1\right)^{n}}{n!}\lb(\vec{\partial}\lb(\xi^P\rb)\vec{\partial}\lb(\xi^{n-1}\cdot\xi_P\rb)-\lb(\cdots\rb)^t\rb)\cdot
			\end{eqnarray}
			Let us write $\alpha_{n,k}$ as product of two factors \be \alpha_{n,k}=\frac{1}{2}\frac{\lb(-1\rb)^{n}}{n!} \times \beta_{n,k}\ee  where $\beta_{n,k}$ is some other $n$ and $k$ dependent function. For $LHS$ to be equal to the $RHS$ we get following conditions on $\beta_{n,k}$.
			\begin{eqnarray}
				-\beta_{n+1,0}+\beta_{n,0}+\beta_{n,n-1}\ &=&\ 1, \\
				-\beta_{n+1,k}+\beta_{n,k}+\beta_{n,k-1}\ &=& 0\cdot
			\end{eqnarray}
			Let us focus on the second condition first and write it as $\beta_{n+1,k}=\beta_{n,k}+\beta_{n,k-1}$, this is just the `Pascal's rule' in combinatorics if we choose $\beta_{n,k}={n-1 \choose k}$. Now for the first condition
			\be 
			-{n\choose 0} + {n-1\choose 0} +{n-1\choose n-1}=-1+1+1=1,
			\ee 
			so it is also satisfied. In summary, we deduce that 
			\be 
			\Delta_{n}-\Delta_{n}^t\ =\ \SUL{k=0}{k=n-2} \alpha_{n,k}\lb(\  \vec{\partial}\lb(\xi^{k}\cdot\xi^P\rb)\ \vec{\partial}\lb(\xi^{n-k-2}\cdot \xi_P\rb)\ -\ \lb(\cdots\rb)^t \rb) \text{ with } \alpha_{n,k}=\frac{1}{2}\frac{\lb(-1\rb)^{n}}{n!}{n-1 \choose k}\cdot\label{ddtfinal}
			\ee
			satisfy equation (\ref{Iter2}) for any $n$. One can check that this formula agrees with the results obtained in the last section, up to quintic order. From this formula, we can make a choice of $\Delta_{n}$ and hence $\Delta =\SUL{n}{\ }\Delta_{n}$ can be determined. From $\Delta$, one can then find $\chi\lb(\xi\rb)$. Using the procedure we described in the last section, one particular choice of $\chi\lb(\xi\rb)$ is
			\be 
			\chi^M\lb(\xi\rb)\ =\ \SUL{k=0}{k=n-2} \alpha_{n,k}  \ \lb(\xi^{k}\cdot\xi^P\rb)\ \partial^M\ \lb(\xi^{n-k-2}\cdot \xi_P\rb)\cdot\label{chiall}
			\ee
			Once we have $\chi\lb(\xi\rb)$, we can use (\ref{NewParDef}) to obtain $\xi'\lb(\xi\rb)$ to any order in $\xi$.
			\par In \cite{GlobalAspects} Berman et al. considered the HZ conjecture in the context of equivalence classes of finite transformations. Equivalence classes were defined modulo the `non-translating' transformations. These are the transformations which differ from the identity by a nilpotent matrix. Let $\mathcal{F}\lb(\xi\rb)$ be a representative of this equivalence class then it can be formally defined as follows:
			\be 
			\lb[\mathcal{F}\lb(\xi\rb)\rb]\ =\ \lb\{ \mathcal{F}\lb(\xi\rb)\lb( \mathbf{1} \ + \ Q\rb), \text{ where }\  Q^2\ =\ 0\rb\}.
			\ee 
			Then it was shown that  `$\mathcal{G}^{-1}(\xi)\ \mathcal{F}(\xi)$'  belongs to the same equivalence class as the identity. An expression for `$\mathcal{G}^{-1}(\xi)\ \mathcal{F}(\xi)$' was computed explicitly which can be written in the following form
			\be 
			\mathcal{G}^{-1}(\xi)\ \mathcal{F}(\xi)\ =\ \prod_{n=2}^{\infty}\prod_{k=0}^{k=n-1} \lb(\ \mathbf{1}\ +\ \kappa_{n,k}\ M_{n,k}\rb),\label{bermanResult}
			\ee
			where $\kappa_{n,k}$ is an $n$ and $k$ dependent coefficient and $M_{n,k}$ is a nil-potent matrix given by:
			\be  
			\kappa_{n,k}\ =\ \frac{1}{2}\frac{\lb(-1\rb)^n\lb(n-2k-1\rb)}{\lb(n+1\rb)\lb(k+1\rb)! \lb(n-k\rb)!} \ \ \text{and}\ \ \ M_{n,k}\ =\ \vec{\partial}\lb(\xi^k\cdot \xi^P\rb)\vec{\partial}\lb(\xi^{n-k-1}\cdot \xi_P\rb). \label{bermDef}
			\ee

			We want to compare the result (\ref{bermanResult}) with our analysis here. Since we have shown that $\mathcal{F}\lb(\xi\rb)\ =\ \mathcal{G}\lb(\xi'\lb(\xi\rb)\rb)\ =\mathcal{G}\lb(\xi\rb)\ \mathcal{G}\lb(\chi\lb(\xi\rb)\rb)$, we expect the RHS of (\ref{bermanResult}) to agree with our expression for $\mathcal{G}\lb(\chi\lb(\xi\rb)\rb)$. In the following we will show that this is indeed the case.
			\par We use the definition of $\mathcal{G}\lb(\chi\lb(\xi\rb)\rb)\ =\ \mathbf{1}\ +\Delta-\Delta^t$ and our result in equation (\ref{ddtfinal}). We also recall that $\Delta_n-\Delta_{n}^t=0$ for $n\leq 2$. Using these facts we can write
			\begin{eqnarray}
				\mathcal{G}\lb(\chi\lb(\xi\rb)\rb)\ &=&\ \mathbf{1}\ +\ \sum_{n=3}^{\infty}\ \sum_{k=0}^{n-2}\ \alpha_{n,k}\lb(\  \vec{\partial}\lb(\xi^{k}\cdot\xi^P\rb)\ \vec{\partial}\lb(\xi^{n-k-2}\cdot \xi_P\rb)\ -\ \lb(\cdots\rb)^t \rb).
			\end{eqnarray}
			The two terms in the sum can be combined by replacing the dummy index $k$ by $n-k-2$.
			\begin{eqnarray}
				\mathcal{G}\lb(\chi\lb(\xi\rb)\rb)\ &=&\ \mathbf{1}\ +\ \sum_{n=3}^{\infty}\ \sum_{k=0}^{n-2}\ \lb(\alpha_{n,k}\ -\ \alpha_{n,n-k-2}\rb)\lb(\  \vec{\partial}\lb(\xi^{k}\cdot\xi^P\rb)\ \vec{\partial}\lb(\xi^{n-k-2}\cdot \xi_P\rb)\  \rb), \\ 
				&=&\ \mathbf{1}\ +\ \sum_{n=3}^{\infty}\ \sum_{k=0}^{n-2}\ \lb(\alpha_{n,k}\ -\ \alpha_{n,n-k-2}\rb) M_{n-1,k},
			\end{eqnarray}
			where we have used the definition of $M_{n,k}$ from equation (\ref{bermDef}). We now replace the dummy index $n$ by $n+1$ to get
			\begin{eqnarray}
				\mathcal{G}\lb(\chi\lb(\xi\rb)\rb)\ &=&\ \mathbf{1}\ +\ \sum_{n=2}^{\infty}\ \sum_{k=0}^{n-1}\ \lb(\alpha_{n+1,k}\ -\ \alpha_{n+1,n-k-1}\rb) M_{n,k} . 
			\end{eqnarray}
			Since the $M_{n,k}$ are nil-potent matrices, we can write
			\begin{eqnarray}
				\mathcal{G}\lb(\chi\lb(\xi\rb)\rb)\ &=&\ \prod_{n=2}^{\infty}\ \prod_{k=0}^{n-1}\ \bigg(  \mathbf{1}\ +\ \lb(\alpha_{n+1,k}\ -\ \alpha_{n+1,n-k-1}\rb)M_{n,k}\bigg), \label{gchi}
			\end{eqnarray}
			From equation (\ref{ddtfinal}) we use the expression for $\alpha_{n,k}$ to compute
			\begin{eqnarray}
				\alpha_{n+1,k}-\alpha_{n+1,n-k-1}\ &=&\ \frac{1}{2}\frac{\lb(-1\rb)^{n+1}}{\lb(n+1\rb)!}\lb({n \choose k}-{n \choose n-k-1}\rb), \\
				&=&\ \frac{1}{2}\frac{\lb(-1\rb)^{n+1}}{\lb(n+1\rb)}\frac{1}{k!\ \lb(n-k-1\rb)!}\ \lb(\frac{1}{n-k}\ -\frac{1}{k+1} \rb), \\
				&=\ & \frac{1}{2}\frac{\lb(-1\rb)^n}{\lb(n+1\rb)}\frac{n-2k-1}{\lb(k+1\rb)!\lb(n-k\rb)!}, \\
				&=&\ \kappa_{n,k}.
			\end{eqnarray}
			Using this in equation (\ref{gchi}), we see that 
			\begin{eqnarray}
				\mathcal{G}\lb(\chi\lb(\xi\rb)\rb)\ &=&\ \prod_{n=2}^{\infty}\ \prod_{k=0}^{n-1}\ \bigg(  \mathbf{1}\ +\ \kappa_{n,k}\ M_{n,k}\bigg).
			\end{eqnarray}
			which is precisely the RHS of equation (\ref{bermanResult}) and we deduce that our result for $\mathcal{G}\lb(\chi\lb(\xi\rb)\rb)$ is in agreement with \cite{GlobalAspects} . However the relationship between $\chi$ and $\xi'$ was not discussed in \cite{GlobalAspects}. We made explicit the connection between $\chi(\xi)$ and $\xi'(\xi)$ due to our results in the section (\ref{decG}), which play a crucial rule in proving Hohm and Zwiebach's conjecture.
			\section{Conclusions and Outlook}\label{concl}
			We conclude this paper by summarizing our results and discussing the important issue of composition of  finite transformations in double field theory. We will also comment on the finite transformations in exceptional field theory.
					
			We have shown that the formula for finite gauge transformations in double field theory is equivalent to the transformations obtained by exponentiating the generalized Lie derivative. In particular we proved that for every parameter $\xi^M$ we can find a parameter $\xi'^M\lb(\xi\rb)$ such that \be \mathcal{F}\lb(\xi\rb)\ =\ \mathcal{G}\lb(\xi'\lb(\xi\rb)\rb),\ee 
			and we showed that $\mathcal{G}\lb(\xi'\lb(\xi\rb)\rb)$ can be decomposed as $\mathcal{G}\lb(\xi\rb)\mathcal{G}\lb(\chi\lb(\xi\rb)\rb)$, with $\chi\lb(\xi\rb)$ being a quasi-trivial parameter. 
			We have also given an iterative procedure which can be used to determine $\xi'^M\lb(\xi\rb)$ to any order in $\xi$. We used this procedure to obtain results up to quintic order and verified them explicitly.
			In the last section we also gave an explicit formula for $\chi\lb(\xi\rb)$ to all orders in $\xi$. We also showed that $\xi'\lb(\xi\rb)$ obtained using this procedure is not unique. This non-uniqueness can be understood in terms of the inherent non uniqueness of the generalized Lie derivative because the generalized Lie derivative with respect to a trivial parameter is zero.
					
			We note that the composition rules for $\mathcal{G}\lb(\xi\rb)$ are well understood. After proving (\ref{claim}), we are now in a better position to understand the composition of $\mathcal{F}\lb(\xi\rb)$. Composition of $\mathcal{G}\lb(\xi\rb)$ was studied in section 5.2 of \cite{LargeGT} and it was shown that
			\be 
			\mathcal{G}\lb(\xi_2\rb)\mathcal{G}\lb(\xi_1\rb)\ =\ \mathcal{G}\lb(\xi^C\lb(\xi_2,\xi_1\rb)\rb),
			\ee
			with 
			\be 
			\xi^C\lb(\xi_2,\xi_1\rb) =\ \xi_2+\xi_1+\frac{1}{2}\lb[\xi_2,\xi_1\rb]_C  +\cdots\ ,\label{xi12c}
			\ee
			where `$\cdots$' contains further nested C-brackets between $\xi_2$ and $\xi_1$ given by the BCH formula. The issue of composition was also discussed in \cite{LargeGT}, where generalized coordinate transformations generated by a parameter $\Theta\lb(\xi\rb)$  were considered such that $\mathcal{F}\lb(\Theta\lb(\xi\rb)\rb)=\mathcal{G}\lb(\xi\rb)$. In our language, this means that $\xi'\lb(\Theta\lb(\xi\rb)\rb)\ =\ \xi$. So under the coordinate transformations, 
			\begin{eqnarray}
				X'\ =\ e^{-\Theta\lb(\xi_1\rb)}\  X,\ \ \ \ \ \ 
				X''\ =\ e^{-\Theta\lb(\xi_2\rb)} \ X',
			\end{eqnarray}
			the composition of $\mathcal{F}$ is given by:
			\begin{eqnarray}
				\mathcal{F}\lb(\Theta\lb(\xi_1\rb)\rb)\mathcal{F}\lb(\Theta\lb(\xi_2\rb)\rb)\ &=&\ \mathcal{G}\lb(\xi_2\rb)\mathcal{G}\lb(\xi_1\rb), \\
				&=&\ \mathcal{G}\lb(\xi^C\lb(\xi_2,\xi_1\rb)\rb),\\
				&=&\ \mathcal{F}\lb(\Theta\lb(\xi^C\lb(\xi_2,\xi_1\rb)\rb)\rb),
			\end{eqnarray} which implies that the composition of coordinates is given by
			\be 
			X''\ =\ e^{-\Theta\lb(\xi^C\lb(\xi_2,\xi_1\rb)\rb)}\  X.
			\ee 
			We want to understand this composition more directly, i.e., in terms of generalized coordinate transformations given by
			\begin{eqnarray}
				X'\ =\ e^{-\xi_1} X,\ \ \ \ \ \ 
				X'\ =\ e^{-\xi_2} X'.
			\end{eqnarray}
			The corresponding $\mathcal{F}$ matrices are given by:
			\begin{eqnarray}
				\mathcal{F}\lb(\xi_1\rb)\ =\ \mathcal{G}\lb(\xi_1+\delta\lb(\xi_1\rb)\rb)\ =\ \mathcal{G}\lb(\xi_1\rb)\mathcal{G}\lb(\chi\lb(\xi_{1}\rb)\rb)\\
				\mathcal{F}\lb(\xi_2\rb)\ =\ \mathcal{G}\lb(\xi_2+\delta\lb(\xi_2\rb)\rb)\ =\ \mathcal{G}\lb(\xi_2\rb)\mathcal{G}\lb(\chi\lb(\xi_2\rb)\rb)
			\end{eqnarray}
			where $\delta\lb(\xi_i\rb)$ and $\chi\lb(\xi_i\rb)$ are determined using equations (\ref{NewParDef}) and (\ref{chiall}). It is now easy to see that
			\begin{eqnarray}
				\mathcal{F}\lb(\xi_2\rb)\mathcal{F}\lb(\xi_1\rb)\ &=&\ \mathcal{G}\lb(\xi_2+\delta\lb(\xi_2\rb)\rb)\mathcal{G}\lb(\xi_1+\delta\lb(\xi_1\rb)\rb)\\
				&=&\ \mathcal{G}\lb(\xi^C\lb(\xi_2+\delta\lb(\xi_2\rb),\xi_1+\delta\lb(\xi_1\rb)\rb)\rb)\\
				&=&\ \mathcal{F}\lb(\Theta\lb(\xi^C\lb(\xi_2+\delta\lb(\xi_2\rb),\xi_1+\delta\lb(\xi_1\rb)\rb)\rb)\rb)
				.\label{composition}
			\end{eqnarray}
			This composition rule then implies that the composition of the generalized coordinate transformations is given by
			\be 
			X''\ =\ e^{-\Theta\lb(\xi^C\lb(\xi_2+\delta\lb(\xi_2\rb),\xi_1+\delta\lb(\xi_1\rb)\rb)\rb)}\ X . \label{CompResult}
			\ee
			We conclude the discussion about the composition by noting that the composition rule (\ref{CompResult}) is very unconventional and it has important consequences regarding associativity  as discussed in section 6 of \cite{LargeGT}. 
					
			In the context of exceptional field theory (see refs. \cite{ExFormSugra,GenGeoM,GaugeGenDiff,MtheoEgeo,E7nSUGRA,DInvaAction,SUGRAasGenGeo,SL5nU,ExGeonTensFields,NonGravESup,EFT1,EFT2,EFT3,GCartanCal,EFTSO5,ExGeo}), the issue of finite transformations still needs to be understood. In particular there is no analogue for the matrix $\mathcal{F}$ which gives finite transformations of the fields. We conclude this paper with some remarks about this issue.  We will follow the notation of \cite{ExGeo} closely. In exceptional field theory, the generalized Lie derivative takes the following form
			\be 
			\gld{\xi}=\xi+a-a^Y,
			\ee
			where $\lb(a^Y\rb)_{M}^{\ \ N}\ \equiv Y_{MP}^{\ \ \ QN} a_{Q}^{\ \ P}$, $a_{M}^{\ \ N}=\partial_M\xi^N$ and $Y_{MP}^{\ \ \ QN}$ is the $E_{n(n)}$ invariant tensor.  The notion of C-bracket is replaced by a generalized Lie bracket called `E-bracket' defined as follows:
			\be 
			\lb[\xi_1,\xi_2\rb]_{E}^{M}\ =\ \xi_{1}^{P}\partial_P \xi_{2}^{M}-\xi_{2}^{P}\partial_P \xi_{1}^{M} - \frac{1}{2}Y_{PQ}^{\ \ \ MN}\lb(\xi_{1}^{Q}\partial_N\xi_{2}^{P}-\xi_{2}^{Q}\partial_N\xi_{1}^{P}\rb)
			.\ee
			Similar to the case of double field theory, generalized Lie derivatives form a Lie algebra given by
			\be 
			\lb[\gld{\xi_1},\gld{\xi_2}\rb]\ =\ \gld{\lb[\xi_1,\xi_2\rb]_E}.
			\ee
			The closure of this Lie algebra puts some constraints on $Y_{PQ}^{\ \ \ MN}$. These constraints include an analogue of the strong constraint in double field theory (see section 6.4 of \cite{ExGeo} for a complete list of constraints)
			\be 
			Y_{MP}^{\ \ \ QN}\partial_Q\partial_N\lb(\cdots\rb)=0\label{newSC},
			\ee
			where `$\cdots$' is any product of fields and parameters. Now we modify the definition of the quasi-trivial parameter and for the exceptional case, we define a parameter of the following form to be quasi-trivial.
			\be  V^N=\sum_{i} Y_{MP}^{\ \ \ QN}\rho^{M}_i\partial_Q\eta^{P}_i.\label{newqt}\ee  It is easy to see using equation (\ref{newSC}), that  $V^P\partial_P$ gives zero when acting on product of fields and parameters. The analogue of the matrix $\mathcal{G}\lb(\xi\rb)$ is defined as:
			\begin{eqnarray}
				\mathcal{G}_E\lb(\xi\rb)\ =\ e^{-\xi}e^{\xi+a-a^Y}.\label{Gexc}
			\end{eqnarray}
			The relevant similarities between double field theory and exceptional field theory end here. Now note that we do not have analogue of equation (\ref{Gqt2}). This follows due to the fact that for a quasi-trivial parameter $\xi^M$, the matrix $a-a^Y$ is not nilpotent in general. This, in turn, follows from the fact that the E-bracket(as opposed to the C-bracket) of two quasi-trivial parameters is not zero in general. Also we do not have the analogue $\mathcal{F}_E$ of the matrix $\mathcal{F}$ which implements the finite transformations of the fields and can be written in terms of the generalized coordinate transformations (i.e., in terms of $\frac{\partial X}{\partial X'}$ and $\frac{\partial X'}{\partial X}$).
					
			While finding the matrix $\mathcal{F}_E$  remains a challenging open question we argue that regardless of the final form of $\mathcal{F}_E$, we should be able to express it in the form $ \mathcal{F}_E\lb(\xi\rb)=\mathcal{G}_{E}\lb(\xi+\gamma\lb(\xi\rb)\rb)$, where $\gamma(\xi)$ is quasi-trivial parameter and $X'=e^{-\xi}X$. This conclusion follows by considering the transformation of a scalar field and using similar kind of arguments as in section 2.1. Our analysis in section \ref{decG} can also be generalized\footnote{For the exceptional case, equation \ref{lem1} does not hold but it does not present any difficulty in generalizing the analysis. It only leads to a different expression for $\chi$ in terms of $\delta$ but the quasi-trivial nature of $\chi$ and $\delta$ is preserved.} to the case of exceptional field theory and we can deduce that $\mathcal{G}_{E}\lb(\xi+\gamma\lb(\xi\rb)\rb)=\mathcal{G}_E\lb(\xi\rb)\mathcal{G}_E\lb(\Gamma\lb(\xi\rb)\rb)$ where $\gamma$ and $\Gamma$ satisfy a similar relation as in equation (\ref{NewParDef}) (with $\delta$, $\chi$ and C-brackets replaced by $\gamma$, $\Gamma$ and E-brackets respectively). So we deduce that $\mathcal{F}_E\lb(\xi\rb)=\mathcal{G}_E\lb(\xi\rb)\mathcal{G}_E\lb(\Gamma\lb(\xi\rb)\rb)$. Now the problem of finding $\mathcal{F}_E$ can be stated as: Is there an expression for $\Gamma\lb(\xi\rb)$ such that the RHS of the last equation can be written in terms of $\frac{\partial X}{\partial X'}$ and $\frac{\partial X'}{\partial X}$ where $X'=e^{-\xi}X$? We stress that this reformulation of the problem does not make it any easier as it is not immediately obvious how one should approach this new problem. However it does provide a different way to think about the problem of finding $\mathcal{F}_E$.
			\acknowledgments
			I would like to thank Barton Zwiebach for numerous discussions, very useful comments and kind supervision throughout the course of this work. This work is supported by the U.S. Department of Energy under grant Contract Number DE-SC00012567.

	\end{document}